\newcommand{\jensencite}{\cite{Jensen1906,GraRyz2000}}
\newcommand{\mydescrip}[1]{\smallskip

  \noindent\textcolor{darkgray}{\sffamily\bfseries\mathversion{bold}#1}}
\newcommand{\myemph}[1]{{\color{darkgreen!70!black}\emph{#1}}}
\newtheorem{theorem}{Theorem}
\newtheorem{corollary}{Corollary}
\newtheorem{example}{Example}
\newtheorem{lemma}{Lemma}
\theoremstyle{theorem}
\newtheorem{proposition}{Proposition}
\newtheorem{claim}{Claim}
\theoremstyle{remark}
\newtheorem*{claimproof}{Proof}
\newcommand{\ETH}{ETH}
\newcommand{\probname}[1]{\textsc{#1}}
\DeclareMathOperator*{\argmin}{arg\,min}
\definecolor{winered}{rgb}{0.6,0.1,0.1}
\definecolor{darkblue}{rgb}{0,0,0.4}
\definecolor{darkgreen}{rgb}{0.01,0.6,0.1}
\tikzset{matrixsc/.style={matrix of math nodes, ampersand replacement=\&, row sep=-7pt, column sep=-4pt}}
\newcommand{\gettikzxy}[3]{%
  \tikz@scan@one@point\pgfutil@firstofone#1\relax
  \edef#2{\the\pgf@x}%
  \edef#3{\the\pgf@y}%
}
\newcommand{\prob}[6]{%
   \begin{quote}
    \begin{labeling}{#6}%
    \item[#1]
    \item[\emph{#2}]#3
    \item[\emph{#4}]#5
    \end{labeling}%
  \end{quote}%
}
\newcommand{\probdef}[3]{\prob{#1}{Input:}{#2}{Question:}{#3}{as}}
\newcommand{\myproofparagraph}[1]{\par\smallskip\noindent\fbox{\textit{#1}}}
\newcommand{\pHDClong}{\textsc{$p$-Norm Hamming Centroid}}
\newcommand{\pHDC}{\textsc{$p$-HDC}}
\newcommand{\pMinpHD}{\pHDC}
\newcommand{\CloStr}{\textsc{Closest String}}
\newcommand{\ConStr}{\textsc{Consensus String}}
\newcommand{\TCol}{{\normalfont\probname{$3$-Coloring}}}
\newcommand{\hd}[1]{\mathsf{d}^{#1}}
\newcommand{\hs}{\mathsf{hs}}
\newcommand{\col}{\mathsf{col}}
\newcommand{\zero}{\boldsymbol{0}}
\newcommand{\one}{\boldsymbol{1}}
\newcommand{\objectivel}[2]{\ensuremath{\sum_{s \in #1}\hd{p}(#2,s)}}
\newcommand{\objectiveroot}{\ensuremath{\|(\sol,S)\|_p}}
\newcommand{\sol}{\ensuremath{s^*}}
\newcommand{\distance}{(n+3m)\cdot \exactbound}
\newcommand{\gvalue}{(2^a+2^{a-b})\cdot  \hat{n}^p}
\newcommand{\guniqvalue}{(2^a+2^{a-b})\cdot  (\hat{n}+1)^p}
\newcommand{\kvalue}{2\hat{n}\cdot \sqrt[p]{2^{a-p}+2^{a-b-p}+2(n+3m)}}
\newcommand{\kvaluep}{(2^a+2^{a-b})\cdot \hat{n}^{p} + 2(n+3m)\cdot (2\hat{n})^{p}}
\newcommand{\functionforderivative}{\big((2^b+1)\cdot \hat{n}-y\big)^{p} +  2^{a-b}\cdot y^p}
\newcommand{\functionforderivativel}{\big((2^b+1)\cdot \hat{n}-y+1\big)^{p} +  2^{a-b}\cdot (y+1)^p}
\newcommand{\gadgetlb}{(2^{a}+2^{a-b})\cdot \hat{n}^{p}}
\newcommand{\exactbound}{2\cdot (2\hat{n})^p}
\crefname{figure}{Figure}{Figures}
\crefname{algorithm}{Algorithm}{Algorithms}
\crefname{proposition}{Proposition}{Propositions}
\crefname{theorem}{Theorem}{Theorem}
\crefname{claim}{Claim}{Claims}
\crefname{lemma}{Lemma}{Lemmas}
\newcommand{\appsymb}{$\star$}
\newcommand{\appref}[1]{{\hyperref[#1]{\appsymb}}}
\newcommand{\toappendix}[1]{%
    {#1}
}
\newcommand{\appendixsection}[1]{%
}
 \newcommand{\appendixproofwithstatement}[3]{%
    #3
 }
\newcommand{\mytitle}{On Computing Centroids According to the $p$-Norms of
  Hamming~Distance~Vectors}
\title{\mytitle}
\author{
  Jiehua Chen$^1$ \and Danny Hermelin$^2$ \and Manuel~Sorge$^1$\\
  {\small $^1$University of Warsaw, Warsaw, Poland}\\
  {\small $^1$\texttt{\{jiehua.chen2,manuel.sorge\}@gmail.com}}\\
  {\small $^2$Ben-Gurion University of the Negev, Beer Sheva, Israel}\\
  {\small $^2$\texttt{hermelin@bgu.ac.il}}%
}
\begin{document}
\thispagestyle{empty}
\maketitle


\begin{abstract}
In this paper we consider the $\pHDClong$ problem which asks to determine whether some given strings have a centroid with a bound on the $p$\nobreakdash-norm of its Hamming distances to the strings. Specifically, given a set~$S$ of strings and a real $k$, we consider the problem of determining whether there exists a string $\sol$ with $\big(\sum_{s \in S}\hd{p}(\sol,s)\big)^{1/p} \le k$, where $\hd{}(,)$ denotes the Hamming distance metric. This problem has important applications in data clustering and multi-winner committee elections, and is a generalization of the well-known polynomial-time solvable \ConStr{} $(p=1)$ problem, as well as the NP-hard \CloStr{} $(p=\infty)$ problem.

Our main result shows that the problem is NP-hard for all fixed rational $p > 1$, closing the gap for all rational values of $p$ between $1$ and $\infty$. Under standard complexity assumptions the reduction also implies that the problem has no $2^{o(n+m)}$-time or $2^{o(k^{\frac{p}{(p+1)}})}$-time algorithm, where $m$ denotes the number of input strings and $n$ denotes the length of each string, for any fixed $p > 1$. 
Both running time lower bounds are tight.
In particular, we provide a $2^{k^{\frac{p}{(p+1)}+\varepsilon}}$-time algorithm for each fixed~$\varepsilon > 0$.
In the last part of the paper, we complement our hardness result by presenting a fixed-parameter algorithm and a factor-$2$ approximation algorithm for the problem.
\end{abstract}

\newpage
\setcounter{page}{1}
\section{Introduction}
The \myemph{Hamming distance between two strings} of equal length is the number of positions at which the corresponding symbols in the strings differ. 
In other words, it measures the number of substitutions of symbols required to change one string into the other, 
or the number of errors that could have transformed one string into the other. This is perhaps the most fundamental string metric known, named after Richard Hamming who introduced the concept in 1950~\cite{Hamming1950}.

While Hamming distance has a variety of applications in a plethora of different domains, a common usage for it appears when clustering data of various sorts. Here, one typically wishes to cluster the data into groups that are centered around some centroid, where the notion of centroid varies from application to application. Two prominent examples in this context are:
\mydescrip{\ConStr{}}, where the centroid has a bound on the sum of its (Hamming) distance to all strings, and
  \mydescrip{\CloStr{}}, where the centroid has a bound on the maximum distance to all strings.

In functional analysis terms, these two problems can be formalized using the $p$-norms of the Hamming distance vectors associated with the clusters. That is, if $S \subseteq \{0,1\}^n$ is a cluster and $\sol \in \{0,1\}^n$ is its centroid, then the $p$-norm of the corresponding Hamming distance vector is defined~by

{\centering
  $\|(\sol,S)\|_p  \coloneqq \big(\sum_{s \in S}\hd{p}(\sol,s)\big)^{1/p},$
  \par}

\medskip

\noindent where $\hd{}(\sol,s) = |\{ i \colon \sol[i] \neq s[i], 1 \leq i \leq n \}|$ denotes the Hamming distance between~$\sol$ and~$s$. Using this notation, we can formulate \ConStr{} as the problem of finding a centroid~$\sol$ with a bound on $\|(\sol,S)\|_1$ for a given set~$S$ of strings, while \CloStr{} can be formulated as the problem of finding a centroid $\sol$ with a bound on~$\|(\sol,S)\|_\infty$. 

\begin{wrapfigure}[7]{r}{0pt}
  \begin{tikzpicture}
    \useasboundingbox[draw=none] (0.1,-.4) rectangle (8.4,1);
    \tikzstyle{every node}=[anchor=west]
    \matrix[matrixsc, row sep=.3pt] (M) {
      \& S: \& \\[-.2ex]
      \& 1111 \& 111 \\ 
      \& 1111 \& 000 \\ 
      \& 0000 \& 100 \\ 
      \& 0000 \& 010 \\
      \& 0000 \& 001 \\
    };
    
    \matrix[right = -1.7ex of M, matrix of math nodes, ampersand replacement=\&,  row sep=-1.2pt, column sep=0pt, 
    column 1/.style={nodes={text width=2.1ex}},
    column 2/.style={nodes={text width=0.7ex}},
    column 3/.style={nodes={text width=10ex}, anchor=base west},
    column 4/.style={nodes={text width=6ex, align=center}},
    column 5/.style={nodes={text width=6ex, align=center}},
    column 6/.style={nodes={text width=6ex, align=center}},
    row 1/.style={nodes={text height=2.3ex, align=center}},
    row 4/.style={nodes={text height=2.1ex, align=center}},
    row 3/.style={nodes={text height=2.1ex, align=center}},
    row 2/.style={nodes={text height=1.9ex, align=center}}] (MM) {
      {} \& {} \& {} \& \|\cdot\|_1 \& \|\cdot\|_2 \& \|\cdot\|_\infty\\
      \sol_1 \& \!=\! \&0000\, 000 \& |[fill=red!10]|14 \& \sqrt{68} \& 7\\
      \sol_2\&\!=\!\&0011\, 000\& 16 \& |[fill=red!10]|\sqrt{56} \& 5\\
      \sol_\infty\&\!=\!\&0011\, 001 \& 17 \&\sqrt{61} \& |[fill=red!10]|{4}\\
    };
     \draw (MM-1-1.north west) -- (MM-1-6.north east);
     \draw (MM-1-1.north west) -- (MM-4-1.south west);

     \gettikzxy{(MM-4-1.south west)}{\xx}{\yy};
     \gettikzxy{(MM-4-6.south east)}{\zz}{\ww};

     \draw (MM-1-6.north east) -- (\zz,\yy);

     \draw (\xx,\yy) -- (\zz,\yy);


     \draw[] ($(MM-1-1.west)!.5!(MM-2-1.west)$) -- ($(MM-1-6.east)!.5!(MM-2-6.east)$);
     \draw[] ($(MM-1-3.north)!.5!(MM-1-4.north)$) -- ({$(MM-4-3.south)!.5!(MM-4-4.south)$} |- MM-4-1.south west);

     \draw[] ($(MM-1-1.north west)$) -- ($(MM-1-3.south)!.5!(MM-1-4.south)$);
     
     \node at ($(MM-1-3)+(0.08,0.08)$) {$p$};
     \node at ($(MM-1-2)+(-.5,-0.2)$) {\small centroid};
   \end{tikzpicture}
   \end{wrapfigure}
The following cluster~$S$ with $5$ strings, each of length $7$, shows that for different~$p$, we indeed obtain different optimal centroids.
For each $p\in \{1,2,\infty\}$, string~$\sol_p$ is an optimal $p$-norm centroid but it is not an optimal $q$-norm centroid, where $q\! \in\! \{1, 2, \infty\} \!\setminus\! \{p\}$.
Moreover, one can verify that $\sol_2$ is the only optimal $2$-norm centroid and no optimal $\infty$-norm centroid is an optimal $2$-norm centroid.

\looseness=-1
The notion of $p$-norms for distance vectors is very common in many different research fields~\cite{MonAffHarBar1982,LovMorWes1988,Gonin1989,NesNem1994,GhiRom1996,AzEpRiWo2004,KloBreSonLasMueZie2009,BelSolCal2016,faliszewski_multiwinner_2017,sivarajan_generalization_2018}.
In cluster analysis of data mining and machine learning, 
one main goal is to partition $m$~observations (i.e., $m$~real vectors of the same dimension) into $K$~groups so that the sum of ``discrepancies'' between each observation and its nearest center is minimized.
Here, two highly prominent clustering methods are \emph{$K$-means}~\cite{Macqueen67somemethods} and \emph{$K$-medians}~\cite{JaiDub1988,BraManStr1996} clustering, each using a slightly different notion of discrepancy measure.
The first method aims to minimize the \emph{sum of squared Euclidean distances} between each observation and the ``mean'' of its respective group. In other words, it minimizes the squared $2$-norm of the Euclidean-distance vector.
$K$-medians, on the other hand, uses the $1$-norm instead of the squared $2$-norm to define the discrepancy to the mean.
Thus, instead of calculating the mean for each group to determine its centroid, one calculates the median.

\looseness=-1
In committee elections from social choice
theory~\cite{faliszewski_multiwinner_2017,sivarajan_generalization_2018,PascRzaSko2018,FalSkoSliTal2019},
the $p$-norm is used to analyze how well a possible committee
represents the voter's choices. In a fundamental approval-based procedure to
select a $t$-person committee
from $n$ candidates, each voter either approves or disapproves each of the candidates,
which can be expressed as a binary string of length~$n$.
An optimal committee is a length-$n$ binary string containing exactly $t$ ones and which minimizes the $p$-norm of the vector of the Hamming distances to each voter's preference string~\cite{sivarajan_generalization_2018}.



\paragraph{Problem definition, notations, and conventions.}
Since the Hamming distance is frequently used in various applications, e.g., in computational biology~\cite{Pev2000}, information theory, coding theory and cryptography~\cite{Hamming1950,CoHoLiLo1997,Roth2006}, in social choice~\cite{Kilgour2010,AmaBarLanMarRie2015} and since the notion of $p$-norm is very prominent in clustering tools~\cite{shier_optimal_1983,brandeau_parametric_1988,LovMorWes1988,ZengSoZoubir2013} and preference aggregation rules~\cite{AmaBarLanMarRie2015,brams_minimax_2007,PascRzaSko2018}, where often $p = 1, 2, \infty$ but also other values of $p$ are used, it is natural to consider computational problems associated with the $p$-norm of the Hamming distance metric.
This is the main purpose of this paper. Specifically, we consider the following problem:
\probdef{$\pHDClong$ (\pHDC)}
{A set~$S$ of strings~$s_1, \ldots, s_m \in \{0, 1\}^n$ and a real~$k$.}
{Is there a string $\sol \in \{0, 1\}^n$ such that $\objectiveroot \leq k$?}%
Throughout, we will call a string~$\sol$ as above a \myemph{solution}. Note that there is nothing special about using the binary alphabet in the definition above, but for ease of presentation we use it throughout the paper.
When $p=1$, our $\pHDC$ problem is precisely the $\ConStr$ problem, and when $p=\infty$ it becomes the $\CloStr$ problem.

In the following, we list some notation and conventions that we use.
By \myemph{$p$-distance} we mean the $p^{\text{th}}$-power of the Hamming distance.
For each natural number $t$ by $[t]$ we denote the set $\{1, 2, \ldots, t\}$. Unless stated otherwise, by \myemph{strings} we mean binary strings over alphabet~$\{0, 1\}$.
Given a string~$s$, we use $|s|$ to denote the length of this string.
For two binary strings~$s$ and $s'$, let $s\circ s'$ denote the concatenation of $s$ and $s'$. By $s[j]$ we denote the $j$th value or the value in the $j^{\text{th}}$ character of string~$s$. 
By $\overline{s} = (1 - s[j])_{j \in [|s|]}$ we denote the complement of the (binary) string~$s$.
Given two integers~$j,j'\in \{1,2,\cdots, |s|\}$ with $j \le j'$, 
we write $s|^{j'}_{j}$ for the substring $s[j]s[j+1]\cdots s[j']$.
Given a number~$\ell$, 
we use $\zero_\ell$ and $\one_\ell$ to denote the length-$\ell$ all-zero string and the length-$\ell$ all-one string, respectively.

\paragraph{Our contributions.}
Our main result is a tight running time bound on the \pHDC{} problem for all fixed rationals~$p > 1$. Specifically, we show that the problem is NP-hard and can be solved in $2^{k^{p/(p+1)+\varepsilon}}\cdot |I|^{O(1)}$ time for arbitrary small $\varepsilon > 0$ where $|I|$ denotes the size of the instance, but cannot be solved in $2^{o(k^{p/(p+1)})}$ time unless the Exponential Time Hypothesis (ETH)~\cite{CyFoKoLoMaPiPiSa2015} fails. The lower bounds are given in \cref{thm:mphd-np-h,prop:nph-distinct} and the upper bound in \cref{thm:subexp}. While the upper bound in this result is not very difficult, the lower bound uses an intricate construction and some delicate arguments to prove its correctness.
In particular, the construction extensively utilizes the fact that since $p>1$, the $p$-norm of Hamming distances is convex and always admits a second derivative.
We believe that this kind of technique is of interest on its own.
As another consequence of the hardness construction, we also obtain a $2^{o(n+m)}$ running time lower bound assuming ETH, which gives evidence that the trivial brute-force $2^n\cdot |I|$-time algorithm for the problem cannot be substantially improved. Moreover, the lower bounds also hold when we constrain the solution string to have a prescribed number of ones. That is, we also show hardness for the committee election problem mentioned above (\cref{cor:committee-nph}).

In the final part of the paper we present two more algorithms for \pHDC.
First, we provide an $m^{O(m^2)}\cdot |I|^{O(1)}$ time algorithm (see \cref{thm:fpt-m}), by first formulating the problem as a so-called Combinatorial $n$-fold Integer Program, and then applying the algorithm developed by \citet{KnoKouMni2017}.
Second, we show that the problem can be approximated in polynomial time within a factor of 2, using an extension of the well known 2-approximation algorithm for \textsc{Closest String} (see \cref{prop:factor-2}).


\paragraph{Related work.}
The NP-complete \CloStr~\cite{FraLit1997,LaLiMaWaZh2003} problem (aka.\ \textsc{Minimum Radius}
) is a special case of \pMinpHD{} with $p=\infty$.
It seems, however, difficult to adapt this hardness reduction to achieve our hardness results for every fixed rational $p$ (see also the beginning of \cref{sec:nphs} for some more discussion).
\CloStr{} has been studied extensively under the lens of parameterized complexity and approximation algorithmics.
The first fixed-parameter algorithm for parameter $k$, the maximum Hamming distance bound, was given by \citet{GraNieRoss2003}, runs in $O(k^k\cdot km + mn)$ time where $m$ and $n$ denote the number and the length of input strings, respectively. This algorithm works for arbitrary alphabet~$\Sigma$.
For small alphabets~$\Sigma$, there are algorithms with $O(mn + n\cdot |\Sigma|^{O(k)})$~running time~\cite{MaSun2009,chen_three-string_2012}. Both types of running time are tight under the ETH~\cite[Theorem 14.17]{CyFoKoLoMaPiPiSa2015}. For arbitrary alphabet~$\Sigma$, \citet{KnoKouMni2017} gave an algorithm with $m^{O(m^2)}\cdot \log{n}$ running time based on so-called $\mathsf{n}$-fold integer programming.
As for approximability, \CloStr{} admits a PTAS with running time~$O(n^{O(\epsilon^{-2})})$~\cite{MaSun2009} but no EPTAS unless FPT${}={}$W[1]~\cite{CygLokPilPilSau2016}.

Our problem falls into the general framework of convex optimization with binary variables.
If a solution is allowed to have fractional values, then the underlying convex optimization can be solved in polynomial time for each fixed value~$p\le 2$~\cite[Chapter 6.3.2]{NesNem1994}.

For $p=2$, maximizing (instead of minimizing) the $p$-norm reduces to \textsc{Mirkin Minimization} in consensus clustering with input and output restricted to two-clusters, which was shown to be NP-hard~\cite{DoGuKoWe2014} under Turing reductions.
Recently, \citet{CheHerSor2018mirkin} showed that the simple $2^n$-time algorithm by brute-force searching all possible outcome solutions is essentially tight under ETH.
They also provided some efficient algorithms and showed that the problem admits an FPTAS using a simple rounding technique.



\section{NP-hardness for the \boldmath$p$-norm of Hamming distance vectors}\label{sec:nphs}
\appendixsection{sec:nphs}
\looseness=-1
We now show that \pHDC{} is NP-hard for each fixed
rational number~$p > 1$ (\cref{thm:mphd-np-h,prop:nph-distinct}) and
that algorithms with running time $2^{o(n + m)}$ or
$2^{o(k^{\nicefrac{p}{(p+1)}})}$ would contradict the \ETH. We reduce
from the NP-hard \TCol{} problem~\cite{GJ79} in
which, given an undirected graph~$G=(V,E)$, we ask whether there is a
\myemph{proper vertex coloring}~$\col\colon V\to \{0,1,2\}$, that is, no two
\emph{adjacent} vertices receive the same color. 

The first challenge we need to overcome when designing the reduction
is to produce some regularity in the solution string:
Given~$\hat{n} \in \mathds{N}$, in \cref{lem:gadget-n}, we show how to
construct a set of strings to enforce a solution string to have exactly $\hat{n}$ ones which only occur in the columns of some specific range.
This allows us later on to build gadgets that have several discrete
states. Indeed, after controlling the overall number of ones in the
solution in this way, we can allocate three columns (one for each color) for each vertex~$v$
in~$G$ and build a gadget for~$v$ such that this
gadget induces minimum $p$-distance to the solution if and only if
there is exactly $1$ one in the solution in the columns allocated for~$v$.
This column determines the color for~$v$. Then, for each edge, we will
introduce an edge gadget consisting of six strings which induce
minimum $p$-distance in the solution if and only if they are ``covered'' by
the ones in the solution exactly twice, corresponding to different
colors.

In general, the design of gadgets for \pHDC\ is quite different from
the known NP-hard case \CloStr\
($p = \infty$)~\cite{FraLit1997,LaLiMaWaZh2003}: In \CloStr\ every
optimal solution~$\sol$ must regard the ``worst'' possible input
string while in our case $\sol$ can escape such constraints by
distributing some of its Hamming distance from the ``worst'' to other strings.

In the remainder of this section, let $a$ and $b$ be two fixed
integers such that $a$ and $b$ are coprime
, $a>b$, and $p = a/b > 1$. To better capture the Hamming distance, we
introduce the notion of the \myemph{Hamming set} of two strings~$s$ and
$s'$ of equal length~$n$, which consists of the indices of the columns at
which both strings differ:
$\hs(s,s')=\{j\in [n] \mid s[j]\neq s'[j]\}$.

\looseness=-1 As mentioned, we first show how to construct a set of
strings to enforce some structure on the optimal solution, that is, a
binary string with minimum sum of the $p$-distances.
\newcommand{\gadgetlemma}{%
    Let $p\!>\!1$ be a fixed rational number, and $a$ and $b$ be two coprime fixed integers with $p\!=\!a/b$.
  Let $S$ consist of one string~$\one_{(2^b+1)\hat{n}}\!\circ\! \zero_{\hat{n}}$ and
  $2^{a-b}$ copies of string~$\zero_{(2^b+2)\hat{n}}$, where $\hat{n}$ is a positive integer. 
  For each string~$\sol \in \{0,1\}^{(2^b+2)\hat{n}}$, the following holds.
  \begin{compactenum}[(1)]
    \item\label{lem-gadget:=} 
    If $\hd{}(\sol,\zero_{(2^b+2)\hat{n}}) =\hat{n}$ and $\hs(\sol,\zero_{(2^b+2)\hat{n}})\subseteq [(2^b+1)\hat{n}]$, 
    then
    $\|(\sol, S)\|^p_p=\gadgetlb$.
    \item\label{lem-gadget:>} If $\hd{}(\sol,\zero_{(2^b+2)\hat{n}}) \neq \hat{n}$ or $\hs(\sol,\zero_{(2^b+2)\hat{n}})\nsubseteq [(2^b+1)\hat{n}]$,
    then
    $\|(\sol,S)\|_p^p > \gadgetlb$.
  \end{compactenum}%
}
\begin{lemma}[\appsymb\footnote{Proofs for results marked by \appsymb\ are deferred to an appendix.}]\label[lemma]{lem:gadget-n}
  \gadgetlemma
\end{lemma}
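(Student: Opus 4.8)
The plan is to collapse the whole statement into the minimization of a single convex one-variable function. Partition the columns into the block $A=[(2^b+1)\hat{n}]$, where the distinguished string $\one_{(2^b+1)\hat{n}}\circ\zero_{\hat{n}}$ carries ones, and the block $B=\{(2^b+1)\hat{n}+1,\dots,(2^b+2)\hat{n}\}$, where it carries zeros. For a candidate $\sol$ let $x$ and $y$ be the numbers of ones of $\sol$ in $A$ and in $B$, respectively. Then $\hd{}(\sol,\zero_{(2^b+2)\hat{n}})=x+y$, so each of the $2^{a-b}$ all-zero strings contributes $(x+y)^{p}$, whereas $\sol$ is at Hamming distance $((2^b+1)\hat{n}-x)+y$ from the distinguished string. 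Hence
\[
  \|(\sol,S)\|_p^p=\bigl((2^b+1)\hat{n}-x+y\bigr)^{p}+2^{a-b}(x+y)^{p}=:f(x,y),
\]
with $0\le x\le(2^b+1)\hat{n}$ and $0\le y\le\hat{n}$. The hypothesis of item~(1) is exactly $(x,y)=(\hat{n},0)$ and item~(2) is its negation, so the lemma reduces to showing that $f$ attains the value $\gadgetlb$, and only there, at $(x,y)=(\hat{n},0)$.

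First I would eliminate $y$. Put $g(x):=f(x,0)=\bigl((2^b+1)\hat{n}-x\bigr)^{p}+2^{a-b}x^{p}$. If $y>0$ then both bases of $f(x,y)$ are nonnegative and strictly larger than the corresponding bases of $g(x)$; since $u\mapsto u^{p}$ is strictly increasing on $[0,\infty)$ for $p>1$, each summand strictly grows and thus $f(x,y)>g(x)$. It therefore suffices to prove that, over integers $x\in\{0,\dots,(2^b+1)\hat{n}\}$, the function $g$ is minimized uniquely at $x=\hat{n}$ with $g(\hat{n})=\gadgetlb$. Granting this, $y>0$ forces $f(x,y)>g(x)\ge\gadgetlb$ and $y=0,\ x\neq\hat{n}$ forces $f(x,0)=g(x)>\gadgetlb$, which is item~(2), while $(x,y)=(\hat{n},0)$ gives equality, which is item~(1).

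The crux is the one-variable analysis of $g$. Since $p>1$, the map $u\mapsto u^{p}$ is strictly convex on $(0,\infty)$, so $g$ is strictly convex on $(0,(2^b+1)\hat{n})$; indeed $g''(x)=p(p-1)\bigl[\,((2^b+1)\hat{n}-x)^{p-2}+2^{a-b}x^{p-2}\,\bigr]>0$ there. Hence $g$ has a unique real minimizer, obtained from $g'(x)=0$, i.e.\ $2^{a-b}x^{p-1}=\bigl((2^b+1)\hat{n}-x\bigr)^{p-1}$. Taking $(p-1)$-th roots and using the decisive identity $p-1=(a-b)/b$, which gives $2^{(a-b)/(p-1)}=2^{b}$, this collapses to $2^{b}x=(2^b+1)\hat{n}-x$, that is $x=\hat{n}$. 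Substituting back and using $2^{bp}=2^{a}$ yields $g(\hat{n})=(2^{b}\hat{n})^{p}+2^{a-b}\hat{n}^{p}=(2^{a}+2^{a-b})\hat{n}^{p}=\gadgetlb$. Because the unique real minimizer equals the integer $\hat{n}$, strict convexity gives $g(x)>g(\hat{n})$ for every integer $x\neq\hat{n}$, as required.

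I expect the main obstacle to be exactly this last point, and it is also why the constants $2^{b}$, $2^{a-b}$, and the block length $(2^b+1)\hat{n}$ are tuned the way they are: one must guarantee that the continuous minimizer of $g$ lands precisely on the integer $\hat{n}$, not merely near it, so that $\hat{n}$ is the unique integer optimum and item~(2) is a strict inequality. If one prefers to avoid calculus, the same conclusion follows from a discrete difference argument: convexity of $u\mapsto u^{p}$ makes the increments $g(x+1)-g(x)$ strictly increasing in $x$, and the chosen constants place their sign change exactly at $x=\hat{n}$; the strict-convexity route above is simply the shortest.
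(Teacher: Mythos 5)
Your proposal is correct and follows essentially the same route as the paper's proof: both reduce the claim to minimizing the single-variable convex function $\bigl((2^b+1)\hat{n}-x\bigr)^{p}+2^{a-b}x^{p}$, locate its unique minimizer at $x=\hat{n}$ via the vanishing derivative and the identity $2^{(a-b)/(p-1)}=2^{b}$, and dispose of ones in the last $\hat{n}$ columns by noting they can only increase both distances (the paper phrases this as a lower bound with both arguments shifted by $1$, you phrase it as exactness of a two-variable expression plus monotonicity, which is a purely presentational difference). Your explicit emphasis on strict convexity forcing strictness at every integer $x\neq\hat{n}$ is a slightly cleaner justification of the paper's ``sole minimum'' claim, but the argument is the same.
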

\appendixproofwithstatement{lem:gadget-n}{\gadgetlemma}{
\begin{proof}
  The first statement is straightforward to see by a simple calculation.
  
  We now prove the second statement.
  Let $y$ equal the number of ones in~$\sol$ in the first $(2^b+1)\cdot \hat{n}$ columns. Then, $\|(\sol,S)\|^p_p \ge \functionforderivative$.
  We define a function~$f \colon [0,(2^b+1)\cdot \hat{n}] \to \mathds{Z}$ with $f(y)\coloneqq \functionforderivative$,
  and show that~$f$ attains its sole minimum over $[0, (2^b+1)\cdot \hat{n}]$ at $y=\hat{n}$.
  Note that~$f$ is a lower bound on the sum of $p$-distances from~$\sol$ to~$S$.
  Furthermore, if $\sol$ has a one in the last $\hat{n}$ columns, then the sum of $p$-distances of $\sol$ is strictly larger than~$f(y)$ because each string from $S$ has only zeros in the last $\hat{n}$~columns.
  The first derivative of $f$ with respect to~$y$ is 
\begin{align}
  \frac{d f}{d y} &= -p \cdot \big((2^b+1) \hat{n}-y\big)^{p-1}+ p\cdot 2^{a-b}y^{p-1}\nonumber\\
  & = p\cdot \Big( 2^{a-b}y^{p-1} -  \big((2^b+1) \hat{n}-y\big)^{p-1} \Big).\label{eq:lb-first-derivative}
\end{align}
Now, observe that the first derivative of $f$ is zero when
the second multiplicand \eqref{eq:lb-first-derivative} is zero, 
because $p>1$,
that is, when
\begin{align} 2^{a-b}y^{p-1} -  \big((2^b+1) \hat{n}-y\big)^{p-1}= 0.\label{eq:lb-first-derivative2} \end{align}
Again, since $p>1$ we can infer that \eqref{eq:lb-first-derivative2} holds when
$2^b\cdot y=(2^{b}+1)\hat{n}-y$.
This is the case only when $y = \hat{n}$.

The second derivative of $f$ respect to $y$ is
\begin{align*}
  \frac{d^2f}{dy^2} & = p \cdot (p-1) \cdot \Big( 2^{a-b} \cdot y^{p-2}~+~\big((2^b+1)\hat{n}-y\big)^{p-2}\Big),
\end{align*}
which is positive at $y=\hat{n}$
(recall that $p > 1$).
Hence, indeed, the sole
minimum of $f(y)$ over $[0, (2^b + 1)\cdot \hat{n}]$ is attained at $y = \hat{n}$ with
$f(\hat{n})=\gadgetlb$.

To summarize, if $\hs(\sol,\zero_{(2^b+2)\hat{n}})\subseteq [(2^b+1)\hat{n}]$ but $\hd{}(\sol,\zero_{(2^b+2)\hat{n}}) \neq \hat{n}$, then $\|(\sol, S)\|^p_p = f(y) > f(\hat{n})$.
If $\hs(\sol,\zero_{(2^b+2)\hat{n}})\nsubseteq [(2^b+1)\hat{n}]$,
then $\sol$ has at least $1$ one in the last $\hat{n}$~columns.
Since the last $\hat{n}$~columns of each string from $S$ are all zeros,
it follows that $\|(\sol, S)\|^p_p \ge \functionforderivativel > f(y) \ge f(\hat{n})$.
\end{proof}
}

\noindent
To show \cref{lem:gadget-n} we crucially use the fact that $p > 1$.
In contrast, if $p = 1$, then taking the majority value in each column
yields an optimal solution, and thus it is impossible to force every optimal solution
to have a certain number of ones without at the same time specifying in which precise columns these ones should occur.

In the reduction we make heavy use of specific pairs of strings whose Hamming
distances to an arbitrary string always sum up to some lower bound. They will enforce
local structure in some columns of the solution, while being somewhat
immune to changes elsewhere. As a tool in the reduction we derive the
following lower bound on the $p$-distance of an arbitrary string to a
pair of strings which are quite far from each other, in terms of Hamming
distances.

\newcommand{\completementlowerbound}{%
 Let $s_1$ and $s_2$ be two strings of the same length~$R$ such that the Hamming distance between $s_1$ and $s_2$ is $\hd{}(s_1,s_2) = 2L$.
  For each rational~$p>1$ and each length-$R$ string~$\hat{s}$ the following holds.
  \begin{inparaenum}[(1)]
    \item\label{lem-complement:1} $\hd{p}(\hat{s}, s_1)+\hd{p}(\hat{s}, s_2) \ge 2\cdot L^p$.
    \item\label{lem-complement:2} If $\hd{}(\hat{s}, s_1) = \hd{}(\hat{s},s_2)= L$, then $\hd{p}(\hat{s}, s_1)+\hd{p}(\hat{s}, s_2) = 2\cdot L^p$. 
    \item\label{lem-complement:3} If $\hd{}(\hat{s}, s_1) \neq L $ or $\hd{}(\hat{s},s_2)\neq  L$, 
    then $\hd{p}(\hat{s}, s_1)+\hd{p}(\hat{s}, s_2) > 2\cdot L^p$. 
    \end{inparaenum}
}
\begin{lemma}[\appsymb]\label[lemma]{lem:complement}
  \completementlowerbound
\end{lemma}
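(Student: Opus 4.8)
The plan is to reduce all three claims to two elementary properties of the function $t \mapsto t^p$ on $[0,\infty)$: that it is strictly increasing, and that, because $p > 1$, it is strictly convex. First I would set $d_1 = \hd{}(\hat{s}, s_1)$ and $d_2 = \hd{}(\hat{s}, s_2)$, so that the three statements become assertions purely about the real numbers $d_1^p + d_2^p$ and $2L^p$ (note $\hd{p}(\hat{s},s_i) = d_i^{\,p}$). The only structural fact I need from the hypothesis $\hd{}(s_1,s_2) = 2L$ is the triangle inequality for the Hamming metric, which gives $d_1 + d_2 \ge \hd{}(s_1,s_2) = 2L$.

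With this in hand, statement~\eqref{lem-complement:1} follows from the chain
\[
  \frac{d_1^p + d_2^p}{2} \ge \Bigl(\frac{d_1 + d_2}{2}\Bigr)^{p} \ge L^p,
\]
where the first inequality is Jensen's inequality applied to the convex map $t \mapsto t^p$ and the second uses $d_1 + d_2 \ge 2L$ together with monotonicity; multiplying by $2$ yields $\hd{p}(\hat{s},s_1) + \hd{p}(\hat{s},s_2) = d_1^p + d_2^p \ge 2L^p$. Statement~\eqref{lem-complement:2} is then immediate, since substituting $d_1 = d_2 = L$ turns both inequalities in the chain into equalities.

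For statement~\eqref{lem-complement:3} the work is to pinpoint exactly when each inequality in the chain is tight. Since $p > 1$, strict convexity makes Jensen's inequality an equality only when $d_1 = d_2$, and strict monotonicity makes the second inequality an equality only when $d_1 + d_2 = 2L$. I would then argue by cases: if $d_1 \neq d_2$, the first inequality is strict and we are done; if $d_1 = d_2$ but $(d_1,d_2) \neq (L,L)$, then $d_1 + d_2 \ge 2L$ forces $d_1 = d_2 > L$, so the second inequality becomes strict. Hence whenever $d_1 \neq L$ or $d_2 \neq L$ the overall inequality is strict.

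The one place demanding care — and the main, if modest, obstacle — is this equality analysis in~\eqref{lem-complement:3}: one must correctly attribute each possible loss of strictness to the right inequality in the chain and verify that the two tightness conditions ($d_1 = d_2$ and $d_1 + d_2 = 2L$) can hold together only at $d_1 = d_2 = L$. Everything else is routine once the triangle inequality supplies $d_1 + d_2 \ge 2L$ and convexity is invoked; in particular, no integrality of the Hamming distances is needed, since the argument works for all real $d_1, d_2 \ge 0$.
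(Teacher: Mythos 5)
Your proposal is correct and follows essentially the same route as the paper: Jensen's inequality for the convex map $t \mapsto t^p$ combined with the triangle inequality for the Hamming metric, with the equality conditions of these two inequalities driving statement~(3). If anything, your case analysis for~(3) (separating $d_1 \neq d_2$ from $d_1 = d_2 > L$) is spelled out more completely than in the paper, which only invokes the Jensen equality condition.
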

  \appendixproofwithstatement{lem:complement}{\completementlowerbound}{
    \begin{proof}
      To simplify the notation,
      we define a convex function~$f\colon \mathds{R}^+ \cup \{0\} \to \mathds{R}$
      with $f(x)=x^p$; recall that $p >1$ so $f$ is indeed convex.
      
      To show the first statement, we will use Jensen's inequality~\jensencite\ on convex functions and the fact that Hamming distances satisfy the triangle inequality, as follows.
      \begin{align}
        \hd{p}(\hat{s},s_1) +  \hd{p}(\hat{s},s_2) & = f(\hd{}(\hat{s},s_1)) +  f(\hd{}(\hat{s},s_2))\label{lemma2-eq:def-f}\\
                                                   & \ge 2 \cdot f\big(\frac{\hd{}(\hat{s},s_1)+  \hd{}(\hat{s},s_2)}{2}\big)\label{lemma2-eq:Jensens}\\
                                                   & \ge 2 \cdot f(\frac{\hd{}(s_1,s_2)}{2})\label{lemma2-eq:Triangle}\\
        & = 2 \cdot L^p.
      \end{align}
      The first equation, (\ref{lemma2-eq:def-f}), follows by our definition of $f$, inequality (\ref{lemma2-eq:Jensens}) follows by Jensen's inequality~\jensencite, inequality (\ref{lemma2-eq:Triangle}) follows from the fact that Hamming distances satisfy the triangle inequality, while the last equation follows from our assumption on $\hd{}(s_1,s_2)$.

      The second statement can be verified by a straightforward calculation.

      The last statement holds by utilizing the fact that Jensen's inequality holds with equality if and only if (in the above instantiation) $\hd{}(\hat{s},s_1)=\hd{}(\hat{s},s_2)$.
      By assumption, this is the case only when $\hd{}(\hat{s},s_1)=\hd{}(\hat{s},s_2)=L$.
      \end{proof}
}

\noindent Using \cref{lem:gadget-n,lem:complement}, we can show NP-hardness of \pMinpHD{} for each fixed rational~$p> 1$.
For better readability, we will first show hardness for the case with multiple identical strings~%
(\cref{thm:mphd-np-h}) and then extend the construction to also include the case where no two strings are the same (\cref{prop:nph-distinct}).

\begin{theorem}\label[theorem]{thm:mphd-np-h}
  For each fixed rational number~$p > 1$, 
  \pHDC{} (with possibly multiple identical strings) 
  is NP-hard.
\end{theorem}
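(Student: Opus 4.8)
The plan is to reduce from \TCol. Given a graph $G=(V,E)$ with $n\coloneqq|V|$ vertices and $m\coloneqq|E|$ edges, I would build a \pHDC{} instance $(S,k)$ whose strings, all padded to a common length, decompose into three kinds of gadgets. First comes an \emph{enforcement gadget}, namely the $2^{a-b}+1$ strings of \cref{lem:gadget-n} instantiated with $\hat{n}=n$, which pins the total number of ones of any budget-respecting solution to exactly $\hat{n}$ and confines them to a prescribed block of columns, at minimum $p$-distance $\gadgetlb$. Second, for each vertex~$v$ I allocate three \emph{color columns} and attach a \emph{vertex gadget}: a single complementary pair of strings at Hamming distance $4\hat{n}$, so that by \cref{lem:complement} its contribution is at least $\exactbound$, with equality exactly when $\sol$ places a single one among the three color columns of~$v$. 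Third, for each edge~$\{u,v\}$ I attach an \emph{edge gadget} of six strings, that is, three complementary pairs again at distance $4\hat{n}$, engineered so that all three pairs attain their minimum $\exactbound$ iff the unique ones in the blocks of $u$ and $v$ sit in \emph{distinct} color positions. The budget is fixed by $k^p=\kvaluep=\gadgetlb+\distance$, precisely the sum of the minimum contributions of the enforcement gadget, the $n$ vertex pairs, and the $3m$ edge pairs.

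For the forward direction I would take a proper coloring $\col\colon V\to\{0,1,2\}$ and let $\sol$ carry a single one in the color column of each vertex prescribed by $\col$, and zeros elsewhere. The equality case of \cref{lem:gadget-n} gives the enforcement contribution exactly $\gadgetlb$; since each vertex block then holds exactly one one and every edge is bichromatic, the equality case of \cref{lem:complement} makes each of the $n+3m$ complementary pairs contribute exactly $\exactbound$. Summing gives $\|(\sol,S)\|_p^p=\kvaluep=k^p$, so $\sol$ is a solution.

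The backward direction is where the bookkeeping has to be tight, and I expect it to be the main obstacle. The key point is that $\|(\sol,S)\|_p^p$ \emph{splits as a sum} over the enforcement gadget and the $n+3m$ complementary pairs, and each summand is bounded below by the corresponding gadget minimum ($\gadgetlb$ by \cref{lem:gadget-n}, $\exactbound$ by \cref{lem:complement}); as these lower bounds already add up to $\kvaluep=k^p$, a solution with $\|(\sol,S)\|_p\le k$ must meet \emph{every} bound with equality simultaneously. The strict-inequality halves of the two lemmas then do the combinatorial work: equality in the enforcement gadget forces exactly $n$ ones inside the color-column block; equality in each vertex pair forces exactly one one per vertex block, so these $n$ ones determine a coloring~$\col$; and equality in all three pairs of an edge gadget forces the two incident ones into different color positions, i.e.\ $\col$ is proper. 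Reading off~$\col$ completes the reduction, which is clearly polynomial since $\hat{n}=n$.

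The genuinely delicate part is the design and verification of the vertex and edge gadgets: I must choose the two complementary strings of each pair so that the purely \emph{metric} equidistance condition of \cref{lem:complement} (Hamming distance exactly $2\hat{n}$ to each string of the pair) is equivalent to the intended \emph{combinatorial} event (one one per vertex block, distinct colors across an edge), while making the pairs of different gadgets share the color columns consistently and ruling out any ``cheating'' that lowers one pair's cost by relocating ones at another pair's expense; the confinement provided by the enforcement gadget is what prevents such relocation into helper columns. Here the convexity of $x\mapsto x^p$ for $p>1$ is essential: it is precisely what renders each gadget minimum \emph{strict}, so that a single misplaced one provably pushes some summand above its bound and thereby breaks the budget $k^p=\kvaluep$.
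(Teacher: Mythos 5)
Your overall architecture matches the paper's reduction (enforcement gadget via \cref{lem:gadget-n}, complementary pairs analyzed via \cref{lem:complement}, budget equal to the sum of gadget minima so that every lower bound must be tight), but there is a genuine gap exactly at the step you defer: the edge gadget you describe cannot be ``engineered'' under your own parameter choice $\hat{n}=n$. With $\hat{n}=n$ the enforcement and vertex gadgets force $\sol$ to carry exactly one $1$ in each vertex's three color columns and no $1$'s elsewhere. Now fix an edge $\{u,v\}$ and let $o_z$ be the number of positions at which both $\sol$ and the $z$-th edge string have a $1$. For a complementary pair at distance $4\hat{n}$, \cref{lem:complement} gives the minimum $\exactbound$ only in the equidistant case $\hd{}(\sol,t)=2\hat{n}$, and since the number of $1$'s in $\sol$ and in $t$ are fixed by the construction, equidistance pins $o_z$ to one specific integer value $o$, the same for all three $z$ by symmetry. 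But if the edge strings place their $1$'s only in the color blocks of $u$ and $v$, then $\sum_{z\in\{0,1,2\}} o_z$ equals the total number of $1$'s of $\sol$ in those two blocks, which is $2$; so you would need $3o=2$, which is impossible (and $o=0$ contradicts the vertex gadgets). Hence no choice of six strings supported only on the two endpoint blocks makes all three pairs simultaneously tight, and your ``all three pairs attain $\exactbound$ iff the colors differ'' equivalence cannot be realized.

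The paper's fix is precisely the idea missing from your sketch: each edge $e_j$ receives its \emph{own} dummy block of three color columns, the string $t^{(z)}_j$ carries a $1$ in position $z$ of both endpoint blocks \emph{and} of the dummy block (so each string has three $1$'s), and correspondingly $\hat{n}=n+m$ so that $\sol$ places one $1$ per vertex block and one $1$ per edge dummy block. Then the tightness condition becomes ``$\sol$ covers $t^{(z)}_j$ exactly once,'' the three required overlaps sum to $3=1+1+1$, and one checks that this holds for all $z$ simultaneously iff the endpoints get distinct colors (the dummy $1$ accounting for the one color used by neither endpoint). A secondary point: \cref{lem:gadget-n} only confines the $1$'s of $\sol$ to the first $(2^b+1)\hat{n}$ columns, not to the $3\hat{n}$ color columns; excluding the intermediate padding columns requires an extra argument via \cref{lem:complement}\eqref{lem-complement:3} and the triangle inequality, which your appeal to ``the confinement provided by the enforcement gadget'' does not supply.
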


\begin{proof}
  First of all, let $a$ and $b$ be two fixed coprime integers such that $p=a/b$.
  To show the hardness result, we reduce from the NP-hard \TCol{} problem~\cite{GJ79} defined above.
  %
  Let $G=(V, E)$ be an instance of \TCol{}. Let $n$ be the number of vertices in~$G$ and $m$ the number of edges. Denote $V=\{v_1,v_2,\ldots,v_n\}$ and $E=\{e_1,e_2,\ldots, e_m\}$.

  \looseness=-1
  \myproofparagraph{Construction.}
  We introduce three groups of strings of
  length~$(2^b+2)\cdot \hat{n}$ each, where $\hat{n}=n+m$.
  The first group ensures that each optimal solution string must have exactly
  $\hat{n}$ ones which appear in the first $3\hat{n}$ columns (using \cref{lem:gadget-n}), the
  second group ensures that an optimal solution enforces that each vertex has exactly one of the three colors, and the third group, combined with the
  second group, ensures that no two adjacent vertices obtain the same
  color.

   \mydescrip{Group 1.} Construct one string~$\one_{(2^b+1) \hat{n}}\circ \zero_{\hat{n}}$ and $2^{a-b}$ copies of the same string~$\zero_{(2^b+2)\hat{n}}$.  

   \mydescrip{Group 2.} This group consists of one pair of strings for
   each vertex. Each pair consists of two strings which are mostly
   complements to each other. This ensures that the Hamming distance
   to the solution induced by a pair is somewhat homogeneous,
   regardless where exactly the ones in the solution occur. However,
   in each pair there are three columns, corresponding to the vertex,
   which will skew the pairs of Hamming distances in a way to induce
   minimum $p$-distances only if the solution has exactly $1$ one in these three columns.



   Formally, for each vertex~$v_i\in V$, 
   let $u_i$ be a string of length $3\hat{n}$ which has exactly $3$ ones in the columns $3i-2, 3i-1,3i$,
   and let $\overline{u}_i$ be the complement of $u_i$. Deriving from $u_i$, we construct two \myemph{vertex strings}
    $s_i$ and $r_i$ with $s_i\coloneqq u_i\circ \zero_{(2^b-2)\hat{n}} \circ 0 \circ \one_{\hat{n}-1}$ and $r_i \coloneqq \overline{u}_i \circ \zero_{(2^b-2)\hat{n}} \circ 1 \circ \zero_{\hat{n}-1}$.
    Note that both strings~$s_i$ and $r_i$ have all zeros in the columns~$\{3\hat{n},\ldots, (2^b+1)\hat{n}\}$ such that $\hd{}(s_i,r_i)=4\hat{n}$.

    \smallskip

    \noindent For an illustration, 
    the strings~$s_2$ and $r_2$, which correspond to the vertex~$v_2$, are as follows:

    
    
      $ s_2 =   {\color{winered}000\, 111\,} \circ {\color{winered}\zero_{3\hat{n}-6}} \circ \zero_{(2^b-2)\hat{n}} \circ {\color{winered} 0} \circ {\color{winered}\one_{\hat{n}-1}}, $~~
      $  r_2  =  {\color{winered}111\, 000\,} \circ  {\color{winered}\one_{3\hat{n}-6}} \circ\zero_{(2^b-2)\hat{n}} \circ {\color{winered}1} \circ {\color{winered}\zero_{\hat{n}-1}}. $

      \mydescrip{Group 3.} \looseness=-1 We now use three pairs of
      strings for each edge to ensure relatively homogeneous
      distributions of Hamming distances to the solution and then skew
      them. This time, we aim to skew distances to the solution so
      that their corresponding $p$-distances are minimum only if the
      solution distributes exactly three ones (corresponding to the
      colors) over three special regions: two corresponding to the
      endpoints of the edge and one extra dummy region.

    Formally, for each edge~$e_j\in E$ 
    let $e^{(0)}_j, e^{(1)}_j$, and $e^{(2)}_j$ denote three strings, each of length~$3\hat{n}$, that ensure that the edge and both of its endpoints each have a distinct color:
    
    \allowdisplaybreaks
    $
    \forall \ell  \in \{1,2,\cdots, \hat{n}\}\colon
      e^{(0)}_{j}[3\ell-2,3\ell-1,3\ell] \coloneqq
    \begin{cases}
        100, & 1\le \ell \le n \text{ with } v_\ell \in e_j \text{, or } \ell=j+n,\\
        000, & \text{otherwise.}
      \end{cases}$

     \quad  \qquad   \qquad \qquad   ~~~~$ e^{(1)}_{j}[3\ell-2,3\ell-1,3\ell] \coloneqq
   \begin{cases}
        010, & 1\le \ell \le n \text{ with } v_\ell \in e_j \text{, or } \ell=j+n,\\
        000, & \text{otherwise.}
      \end{cases}$
      
     \quad  \qquad   \qquad \qquad ~~~~$ e^{(2)}_{j}[3\ell-2,3\ell-1,3\ell] \coloneqq
     \begin{cases}
        001, & 1\le \ell \le n \text{ with } v_\ell \in e_j \text{, or } \ell=j+n,\\
        000, & \text{otherwise.}
      \end{cases}
      $



    \noindent Now, we construct the following six \myemph{edge strings} for edge~$e_j$:
  \begin{align*}
    \forall z \in \{0,1,2\}\colon t^{(z)}_j \coloneqq {\color{winered}e^{(z)}_j} \circ \zero_{(2^b-2)\hat{n}}\circ {\color{winered}0} \circ {\color{winered}\one_{\hat{n}-1}} \text{~~ and~~ }
    w^{(z)}_j \coloneqq {\color{winered}\overline{e}^{(z)}_j} \circ \zero_{(2^b-2)\hat{n}}\circ {\color{winered}1} \circ {\color{winered}\zero_{\hat{n}-1}}.
  \end{align*}
  Just as for group~2, the two strings~$t^{(z)}_j$ and $w^{(z)}_j$ have all zeros in the columns~$\{3\hat{n},\ldots, (2^b+1)\hat{n}\}$ such that $\hd{}(t^{(z)}_j, w_j^{(z)})=4\hat{n}$.
%
  \noindent For an example, assume that $a=3$, $b=2$, $n=3$, and $m=2$, and there is an edge of the form~$e_2=\{v_1,v_3\}$.
  Then, the two triples of strings that we construct for $e_2$ have each length $(2^b+2)(n+m)=30$ and are

\noindent  \begin{tabular}{l@{}c@{}ll@{}c@{}l}
      $t^{(0)}_j$ &$~=~$& ${\color{winered}100\,000\,100\,000\,100}\,0000000000\,{\color{winered}01111}$, &     $w^{(0)}_j$ &$~=~$& ${\color{winered}011\,111\,011\,111\,011}\,0000000000\,{\color{winered}10000}$,\\
      $t^{(1)}_j$ &$~=~$& ${\color{winered}010\,000\,010\,000\,010}\,0000000000\,{\color{winered}01111}$,  &    $w^{(1)}_j$ &$~=~$& ${\color{winered}101\,111\,101\,111\,101}\,0000000000\,{\color{winered}10000}$, \\
    $t^{(2)}_j$ &$~=~$& ${\color{winered}001\,000\,001\,000\,001}\,0000000000\,{\color{winered}01111}$, &
                $w^{(2)}_j$ &$~=~$& ${\color{winered}110\,111\,110\,111\,110}\,0000000000\,{\color{winered}10000}$.
  \end{tabular}

  \smallskip

 \looseness=-1
  \noindent  Summarizing, the instance~$I'$ of \pHDC{}
  consists of the following strings, each of length~$(2^{b}+2)\hat{n}=(2^b+2)(n+m)$:
  \begin{compactenum}[(1)]
    \item Add the $2^{a-b}+1$ strings in group 1 to $I'$.
    \item For each vertex~$v_i\in V$, add the vertex strings~$s_i$ and $r_i$ to $I'$.
    \item For each edge~$e_j\in E$, add two triples~$t^{(0)}_j$, $t^{(1)}_j$, $t^{(2)}_j$ and $w^{(0)}_j$, $w^{(1)}_j$, $w^{(2)}_j$ to $I'$.
    \end{compactenum}
    See Figure~1 for an illustration.

       \begin{figure}[t!]
  \tikzstyle{vertex} = [draw=gray, fill=gray, circle, inner sep=2pt]
  \begin{tikzpicture}
    \def \scalee {1}
  \node[]  at (-1*\scalee,1*\scalee) (G) {$G:$};
  \foreach \i / \j / \n / \c / \s in {0/0/3/blue/2, 1.4/0/4/red/0, 0/1.4/1/red/0, 1.4/1.4/2/green/1} {
    \node[vertex,draw=\c,fill=\c!70] at (\i*\scalee,\j*\scalee) (v\n) {{\color{white}$\boldsymbol\s$}};
  }

  \foreach \n / \o in {1/above, 2/above, 3/below, 4/below} {
    \node[\o = 0pt of v\n] (n\n) {$v_\n$};
  }

  \foreach \s / \t / \n in {1/2/1,2/3/5,3/4/3,2/4/2,1/3/4} {
    \path[draw,thick] (v\s) -- node[midway,fill=white, inner sep=1pt] {$e_\n$} (v\t);
  }

  \node[text width=5.7cm,anchor=south west,  below = 1.9cm of G, xshift=2cm] {Figure~1:~Illustration of the reduction used in \cref{thm:mphd-np-h}. The left figure depicts a graph~$G$ 
    that admits a proper vertex coloring~$\col$ (see the labels on the vertices). For instance, vertex~$v_1$ has color~$0$, i.e., $col(v_1)=0$.
    The right figure shows the crucial part of an instance of \pHDC{} with $p=2$ (i.e., $a=2$ and $b=1$) that we will construct according to the proof for \cref{thm:mphd-np-h}. In every pair of constructed strings we only show the first one.
    A solution string~$\sol$ corresponding to the coloring~$\col$ is depicted at the bottom of the right figure.};

    \matrix[matrixsc, right = 12.5cm of v2, anchor=north east, yshift=.2cm] (M) {
      {}    \& 111 \& 111 \& 111 \& 111 \& 111 \& 111 \& 111 \& 111 \& 111 \& 0000 \& 00000\\
       {}   \& 000 \& 000 \& 000 \& 000 \& 000 \& 000 \& 000 \& 000 \& 000 \& 0000 \& 00000\\
       {}   \& 000 \& 000 \& 000 \& 000 \& 000 \& 000 \& 000 \& 000 \& 000 \& 0000 \& 00000\\[.9ex]
      s_1:\& |[fill=gray!15,inner sep=1pt]|111 \& 000 \& 000 \& 000 \& 000 \& 000 \& 000 \& 000 \& 000  \& 0111 \& 11111\\
      s_2:\&  000 \&  |[fill=gray!15,inner sep=1pt]|111 \& 000 \& 000 \& 000 \& 000 \& 000 \& 000 \& 000  \& 0111 \& 11111\\
      s_3:\&  000 \& 000 \&  |[fill=gray!15,inner sep=1pt]|111 \& 000 \& 000 \& 000 \& 000 \& 000 \& 000  \& 0111 \& 11111\\
      s_4:\&  000 \& 000 \& 000 \&  |[fill=gray!15,inner sep=1pt]|111 \& 000 \& 000 \& 000 \& 000 \& 000  \& 0111 \& 11111\\[0.4ex]
      t_1^{(0)}: \&  |[fill=gray!15,inner sep=1pt]|100 \&  |[fill=gray!15,inner sep=1pt]|100 \& 000 \& 000 \&  |[fill=gray!15,inner sep=1pt]|100 \& 000 \& 000 \& 000 \& 000 \& 0111 \& 11111 \\
      t_1^{(1)}: \&  |[fill=gray!15,inner sep=1pt]|010 \&  |[fill=gray!15,inner sep=1pt]|010 \& 000 \& 000 \&  |[fill=gray!15,inner sep=1pt]|010 \& 000 \& 000 \& 000 \& 000 \& 0111 \& 11111 \\
      t_1^{(2)}: \&  |[fill=gray!15,inner sep=1pt]|001 \&  |[fill=gray!15,inner sep=1pt]|001 \& 000 \& 000 \&  |[fill=gray!15,inner sep=1pt]|001 \& 000 \& 000 \& 000 \& 000 \& 0111 \& 11111 \\[0.4ex]
      t_2^{(0)}: \&  000 \&  |[fill=gray!15,inner sep=1pt]|100 \& 000  \& |[fill=gray!15,inner sep=1pt]|100 \& 000 \& |[fill=gray!15,inner sep=1pt]|100 \&000 \& 000 \& 000 \& 0111 \& 11111 \\
      t_2^{(1)}: \&  000 \&  |[fill=gray!15,inner sep=1pt]|010 \& 000 \&  |[fill=gray!15,inner sep=1pt]|010 \& 000 \& |[fill=gray!15,inner sep=1pt]|010 \& 000 \& 000 \& 000 \& 0111 \& 11111 \\
      t_2^{(2)}: \&  000 \&  |[fill=gray!15,inner sep=1pt]|001 \& 000  \&  |[fill=gray!15,inner sep=1pt]|001 \& 000 \&  |[fill=gray!15,inner sep=1pt]|001 \& 000 \& 000 \& 000 \& 0111 \& 11111 \\[0.4ex]
      t_3^{(0)}: \&  000 \& 000 \&|[fill=gray!15,inner sep=1pt]|100 \&   |[fill=gray!15,inner sep=1pt]|100 \& 000 \& 000 \& |[fill=gray!15,inner sep=1pt]|100 \& 000 \& 000 \& 0111 \& 11111 \\
      t_3^{(1)}: \&  000 \& 000\& |[fill=gray!15,inner sep=1pt]|010 \&   |[fill=gray!15,inner sep=1pt]|010 \& 000 \& 000 \& |[fill=gray!15,inner sep=1pt]|010 \& 000 \& 000 \& 0111 \& 11111 \\
      t_3^{(2)}: \&  000 \& 000 \&|[fill=gray!15,inner sep=1pt]|001 \&   |[fill=gray!15,inner sep=1pt]|001 \& 000 \& 000 \& |[fill=gray!15,inner sep=1pt]|001 \& 000 \& 000 \& 0111 \& 11111 \\[0.4ex]
      t_4^{(0)}: \&  |[fill=gray!15,inner sep=1pt]|100 \&  000 \& |[fill=gray!15,inner sep=1pt]|100 \& 000 \& 000 \& 000 \& 000 \& |[fill=gray!15,inner sep=1pt]|100 \& 000 \& 0111 \& 11111 \\
      t_4^{(1)}: \&  |[fill=gray!15,inner sep=1pt]|010 \&  000 \&  |[fill=gray!15,inner sep=1pt]|010 \& 000 \& 000 \& 000 \& 000 \& |[fill=gray!15,inner sep=1pt]|010 \& 000 \& 0111 \& 11111 \\
      t_4^{(2)}: \&  |[fill=gray!15,inner sep=1pt]|001 \&  000 \&  |[fill=gray!15,inner sep=1pt]|001 \& 000 \& 000 \& 000 \& 000 \& |[fill=gray!15,inner sep=1pt]|001 \& 000 \& 0111 \& 11111 \\[0.4ex]
      t_5^{(0)}: \&  000 \& |[fill=gray!15,inner sep=1pt]|100  \& |[fill=gray!15,inner sep=1pt]|100 \& 000 \& 000 \& 000 \& 000 \& 000 \& |[fill=gray!15,inner sep=1pt]|100 \& 0111 \& 11111 \\
      t_5^{(1)}: \&  000 \& |[fill=gray!15,inner sep=1pt]|010  \&  |[fill=gray!15,inner sep=1pt]|010 \& 000 \& 000 \& 000 \& 000 \& 000 \&|[fill=gray!15,inner sep=1pt]|010 \& 0111 \& 11111 \\
      t_5^{(2)}: \&  000 \& |[fill=gray!15,inner sep=1pt]|001  \&  |[fill=gray!15,inner sep=1pt]|001 \& 000 \& 000 \& 000 \& 000 \& 000 \& |[fill=gray!15,inner sep=1pt]|001 \& 0111 \& 11111 \\[.6ex]
      \sol: \& 100 \& 010 \& 001 \& 100 \& 001 \& 001 \& 010 \& 010 \& 100 \& 0000 \& 00000\\ 
};

    \foreach \o / \n / \s / \t in {v/1/1/2,v/2/1/3,v/3/1/4,v/4/1/5,e/1/1/6,e/2/1/7,e/3/1/8,e/4/1/9,e/5/1/10} {
      \draw[decorate, decoration={brace,amplitude=3pt}]  ($(M-\s-\t.north west)+(0.12,0)$) -- node[above] {$\o_\n$} ($(M-\s-\t.north east)+(-0.12,0)$);
    }
    \foreach  \n / \s / \t in {1/1-1/3-1,2/4-1/7-1,3/8-1/22-1} {
      \draw[decorate, decoration={brace,amplitude=3pt,mirror}]  ($(M-\s.north)+(-.4,-0.1)$) -- node[midway, rotate=90,yshift=.5cm] {\footnotesize group $\n$} ($(M-\t.south)+(-.4,0.1)$);
    }
\end{tikzpicture}
 \end{figure}


    Finally, we define $k$ such that $k^p = \kvaluep$. This completes the construction, which can clearly be done in polynomial time. 

    \myproofparagraph{Correctness.}
  Before we show the correctness of our construction, we define a notion and make an observation.
  Let $s$ and $s'$ be two strings of equal length. 
  We say that \myemph{$s$ covers $s'$ exactly once} if there is exactly one integer $\ell \in \{1, 2, \cdots, |s|\}$ with $s[\ell]=s'[\ell] =1$.
  \begin{claim}[\appsymb]\label{claim:exact-cover-cost}
        Let $\sol$ and $s$ be two strings, both of length~$4\hat{n}$, such that 
    \begin{inparaenum}[(i)]
      \item $\sol$ has exactly $\hat{n}$ ones and each of them is in the first $3\hat{n}$~columns, and
      \item\label{claim-1:prop-s}in $s$, the first $3\hat{n}$~columns have exactly $3$ ones and the last $\hat{n}$~columns are $0\circ \one_{\hat{n}-1}$.
    \end{inparaenum}
    Then, if $\sol$ covers $s$ exactly once, 
    then $\hd{p}(\sol, s) + \hd{p}(\sol, \overline{s}) = \exactbound$;
    else $\hd{p}(\sol, s) + \hd{p}(\sol, \overline{s}) > \exactbound$.  
  \end{claim}
  \appendixproofwithstatement{claim:exact-cover-cost}{%
    Let $\sol$ and $s$ be two strings, both of length~$4\hat{n}$, such that 
    \begin{inparaenum}[(i)]
      \item $\sol$ has exactly $\hat{n}$ ones and each of them is in the first $3\hat{n}$~columns, and
      \item\label{claim-1:prop-s} in $s$, the first $3\hat{n}$~columns have exactly $3$ ones and the last $\hat{n}$~columns have the form $0\circ \one_{\hat{n}-1}$.
    \end{inparaenum}
    Then, if $\sol$ covers $s$ exactly once, 
    then $\hd{p}(\sol, s) + \hd{p}(\sol, \overline{s}) = \exactbound$;
    else $\hd{p}(\sol, s) + \hd{p}(\sol, \overline{s}) > \exactbound$.  
}{
  \begin{claimproof}[of \cref{claim:exact-cover-cost}]
    \ \\ 
    \noindent $\bullet$ Assume that $\sol$ covers~$s$ exactly once and let $\ell\in \{1,2,\ldots, 4\hat{n}\}$ be an integer with $\sol[\ell]=s[\ell]=1$.
    Since $\sol|^{4\hat{n}}_{3{\hat{n}}+1}=\zero_{\hat{n}}$, it follows that $\ell\in \{1,2,\ldots, 3\hat{n}\}$.
    By the property~\eqref{claim-1:prop-s} of $s$ in the claim,
    we have that $\hd{}(\sol, s)=\hd{}(\sol|^{3\hat{n}}_1, s|^{3\hat{n}}_1)+ \hd{}(\sol|^{4\hat{n}}_{3\hat{n}+1}, s|^{4\hat{n}}_{3\hat{n}+1})=(2+\hat{n}-1)+(\hat{n}-1)=2\hat{n}$.  
    Thus, $\hd{}(\sol, \overline{s})=4\hat{n}-\hd{}(\sol, {s})=2\hat{n}$.
    In summary, $\hd{p}(\sol, s) + \hd{p}(\sol, \overline{s}) = \exactbound$.

    \noindent $\bullet$ Assume that $\sol$ does not cover $s_i$ exactly once.
    If we can show that  $\hd{}(\sol, s)\neq 2\hat{n}$ holds, then
    since $\hd{}(s,\overline{s})=4\hat{n}$,
    by \cref{lem:complement}\eqref{lem-complement:3},
    we immediately obtain that
    $\hd{p}(\sol, s) + \hd{p}(\sol, \overline{s}) > \exactbound$.
    Thus, in the remainder of the proof, we only need to prove that
    $\hd{}(\sol, s)\neq 2\hat{n}$. 
    Since $s$ has exactly $3$ ones in the first $3\hat{n}$ columns, there are three cases to consider.
    \begin{compactitem}
      \item[Case 1:] For each $\ell\in \{1,2,\ldots, 3\hat{n}\}$, it holds that $\sol[\ell]+s[\ell]\le 1$.
      Consider the values in the first $3\hat{n}$ columns of $s$ and $\sol$:
      since $s$ contains exactly $3$ ones and $\sol$ contains exactly $\hat{n}$ ones, it follows that $\hd{}(\sol, s)=\hd{}(\sol|^{3\hat{n}}_1, s|^{3\hat{n}}_1)+ \hd{}(\sol|^{4\hat{n}}_{3\hat{n}+1}, s|^{4\hat{n}}_{3\hat{n}+1})=(\hat{n}+3)+(\hat{n}-1)=2\hat{n}+2$.
      \item[Case 2:] There are two distinct integers~$\ell, \ell'\in \{1,2,\ldots, 3\hat{n}\}$ with $\sol[\ell]=\sol[\ell']=s[\ell]=s[\ell']=1$ such that for each other integer~$j\in \{1,2,\ldots, 3\hat{n}\}\setminus \{\ell,\ell'\}$ it holds that $\sol[j]+s[j]\le 1$.
     By assumption, $\sol$ has exactly $\hat{n}$ ones in the first $3\hat{n}$ columns.
      Then, $\hd{}(\sol, s)=\hd{}(\sol|^{3\hat{n}}_1, s|^{3\hat{n}}_1)+ \hd{}(\sol|^{4\hat{n}}_{3\hat{n}+1}, s|^{4\hat{n}}_{3\hat{n}+1})=1+(\hat{n}-2)+(\hat{n}-1)=2\hat{n}-2$.
      \item[Case 3.] For each integer~$\ell \in \{1,2,\ldots, 3\hat{n}\}$ with $s[\ell]=1$ it holds that $\sol[\ell]=1$.
     By assumption, $\sol$ has exactly $\hat{n}$ ones in the first $3\hat{n}$ columns,
     and $s$ has exactly $3$ ones in the first $3\hat{n}$ columns.
     Thus, $\hd{}(\sol, s)=\hd{}(\sol|^{3\hat{n}}_1, s|^{3\hat{n}}_1)+ \hd{}(\sol|^{4\hat{n}}_{3\hat{n}+1}, s|^{4\hat{n}}_{3\hat{n}+1})=(\hat{n}-3)+(\hat{n}-1)=2\hat{n}-4$.
    \end{compactitem}%
    \hfill~(of \cref{claim:exact-cover-cost})~$\diamond$
  \end{claimproof}%
  }
  
  We show that $G$ has a proper $3$-coloring if and only if there is a string~$\sol$ 
  such that the sum of the $p$-distances from $\sol$ to all strings in $I'$ is at most $k^p= \kvaluep$.
  
  For the ``if'' direction, let $\sol$ be a string
  which has a sum of $p$-distances of at most $k^p$ to all strings in $I'$.
  Before we define a coloring for the vertices and show that it is proper
  we observe several properties of the solution string~$\sol$.
  
  By 
  \cref{lem:complement}\eqref{lem-complement:1}, the sum of $p$-distances 
  to all strings from group~2 and group~3
  is at least $2\cdot (2\hat{n})^p\cdot (n+3m)$
  since these groups consist of $n+3m$ pairs of strings, and the strings in each of these pairs have Hamming distance exactly $4\hat{n}$ to each other.
  By the definition of $k$,
  the sum of $p$-distances from $\sol$ to the first of group of strings is thus at most $\gvalue$. 
  Hence, by the contra-positive of \cref{lem:gadget-n}\eqref{lem-gadget:>},
  the solution string~$\sol$ has exactly $\hat{n}$ ones, 
  which all appear in the first $(2^b+1)\hat{n}$ columns, i.e.,
    $\hd{}(\sol,\zero_{(2^b+2)\hat{n}})=\hat{n} \text{ and } \hs(\sol,\zero_{(2^b+2)\hat{n}}) \subseteq [(2^b+1)\hat{n}].$ 
  By \cref{lem:gadget-n}\eqref{lem-gadget:=}, this implies that
  \begin{align}
    \sum_{s \in \text{group~1}}\hd{p}(\sol, s) = (2^a+2^{a-b})\cdot \hat{n}^{p}.\label{eq:bound-g1}
  \end{align}  
  Next, we claim that the ones in the solution~$\sol$ indeed all appear in the first $3\hat{n}$ columns,
  i.e., $\hs(\sol,\zero_{(2^b+2)\hat{n}}) \subseteq [3\hat{n}]$.
  Suppose, for the sake of contradiction, that solution~$\sol$
  contains $x$ ones which appear in columns ranging from $3\hat{n}+1$ to $(2^b+1)\hat{n}$ with $x>0$.
  Consider an arbitrary pair of strings~$s_i$ and $r_i$ from group~$2$ or an arbitrary pair of strings~$t^{(z)}_i$ and $w^{(z)}_i$ from group~$3$; for the sake of readability, represent them by $s$ and $s'$.
  By construction, strings~$s$ and $s'$ have Hamming distance exactly $4\hat{n}$ to each other, but have all zeros in the columns between $3\hat{n}+1$ and $(2^b+1)\hat{n}$.
  Since $x>0$, by the triangle inequality of Hamming distances, it follows that at least one string from the pair, $s$ or $s'$, has Hamming distance more than $2\hat{n}$ from $\sol$. %
  However, by 
  \cref{lem:complement}\eqref{lem-complement:3}, this means that the sum of $p$-distances from $\sol$ to $\{s,s'\}$ exceeds $2\cdot (2\hat{n})^p$.
  Since there are in total $n+3m$ such pairs in groups~2 and~3, the sum of $p$-distances from $\sol$ to these groups exceeds $2(n+3m)\cdot (2\hat{n})^p$,
  a contradiction to equation~\eqref{eq:bound-g1} and the defined bound~$k$.
  Thus, indeed, it holds that
  \begin{align}
    \hd{}(\sol,\zero_{(2^b+2)\hat{n}})=\hat{n} \text{ and } \hs(\sol,\zero_{(2^b+2)\hat{n}}) \subseteq [3\hat{n}].\label{eq:exact-sol}
  \end{align}
  This implies that, when determining the $p$-distance of $\sol$ to the strings from group 2 and group 3, we can ignore, the values in the columns ranging from~$3\hat{n}+1$ to $(2^b+1)\hat{n}$,
  in each string, which includes the solution~$\sol$, because $\sol$ also has only zeros in these columns. We will hence from now on treat these columns as if they do not exist.
In this way, we obtain strings of length~$4\hat{n}$.
Again, consider an arbitrary pair of strings~$s_i$ and $r_i$ from group~$2$ (resp.\ an arbitrary pair of strings~$t^{(z)}_i$ and $w^{(z)}_i$ from group~$3$), and represent them by $s$ and $s'$.
Since we ignore columns $3\hat{n}+1$ to $(2^b+1)\hat{n}$, string $s'$ is the complement of $s$.
  By construction, the Hamming distance between $s$ and $s'$ is exactly $4\hat{n}$.
  Using \cref{claim:exact-cover-cost} on $\sol,s,s'$ , the sum of $p$-distances from $\sol$
  to the pair~$\{s,s'\}$ is indeed equal to~$\exactbound$.
  By the same claim,  it follows that $\sol$ covers each string~$s_i$ (resp.\ $t^{(z)}_j$) from group~2 (resp.\ group~3) exactly once.

  Having this property, we are ready to color the vertices.
  Let $\col \colon V\to \{0,1,2\}$ be a mapping defined as follows.
  For each $v_i \in V$, set $\col(v_i) = z$ where $z\in \{0,1,2\}$ such that $\sol[3i-2+z]=1$.
  Note that, since $\sol$ covers $s_i$ exactly once and since $s_i$ has exactly three ones in the columns~$3i-2$, $3i-1$, and $3i$, 
  there is indeed such a $z$ with $\col(v_i)$.
  We claim that $\col$ is a proper coloring for $G$.
  Suppose, towards a contradiction, that there is an edge~$e_j=\{v_i, v_{i'}\}\in E$ such that 
  $v_i$ and $v_{i'}$ have the same color from $\col$, say $z\in \{0,1,2\}$.
  By the definition of $\col$, this means that $\sol[3i-2+z]=\sol[3i'-2+z]=1$.
  However, by the definition of the string~$t^{(z)}_j$ which corresponds to the edge~$e_j$,
  we also have that $t_j^{(z)}[3i-2+z]=t_j^{(z)}[3i'-2+z]=1$.
  This implies that $t^{(z)}_j$ is not covered by $\sol$ exactly once---a contradiction to our reasoning above that $\sol$ covers each string from the third group exactly once.

%
  \looseness=-1
    For the ``only if'' direction, let $\col\colon V\to \{0,1,2\}$ be a proper coloring for $G$.
  For an edge~$e\in E$ with two endpoints~$v_i, v_{i'}$, let $\col(e)=\{\col(v_i), \col(v_{i'})\}$.
  We claim that string~$\sol$, defined as follows, has the desired bound on the sum of the $p$-distances to all strings of $I'$.

  \qquad  \qquad$    \forall i  \in \{1,2,\cdots, n\}\colon
    \sol[3i-2,3i-1,3i] \coloneqq 
    \begin{cases}
        100, & \col(v_i) = 0,\\
        010, & \col(v_i) = 1,\\
        001, & \col(v_i) = 2.\\
      \end{cases}$

      $
        \forall j  \in \{n+1,n+2,\cdots, \hat{n}\}\colon
        \sol[3j-2,3j-1,3j] \coloneqq
      \begin{cases}
        100, & \col(e_j)=\{1,2\},\\
        010, & \col(e_j)=\{0,2\},\\
        001, & \col(e_j)=\{0,1\}.\\
      \end{cases}
      $

      \quad  \quad   \qquad   \qquad   \qquad \qquad\qquad\qquad\qquad ~ $\sol|^{(2^b+2)\hat{n}}_{3\hat{n}+1}\coloneqq \zero_{\hat{n}}.
   $
\looseness=-1

  First of all, since $\col$ is a proper coloring, $\sol$ is well defined in all $(2^b+2)\hat{n}$ columns. 
  Moreover, it has exactly $n$ ones in the first $3n$ columns and 
  exactly $m$ ones in the next $3m$~columns, 
  and all zeros in the remaining columns.
  Thus, by 
  \cref{lem:gadget-n}\eqref{lem-gadget:>}, the sum of the $p$-distances from $\sol$ to the first group of strings is $\gvalue$.

  Now, we focus on strings from group 2 and group 3.
  Since the solution~$\sol$ and each string in these groups have only zeros in the columns between $3\hat{n}+1$ and $(2^{b}+1)\hat{n}$,
  we can simply ignore the values in these columns and assume from now on that the strings have length~$4\hat{n}$.
  Moreover, for each $i\in [n]$, the pair $s_i$ and $r_i$ can be considered as complement to each other.
  Thus, for each string~$s_i$ from group~2, $\sol$ and $s_i$ fulfill the properties stated in \cref{claim:exact-cover-cost}.
  Moreover, by definition,~$\sol$ covers $s_i$ exactly once.
  Thus, by the same claim, we have that the sum of the $p$-distances from $\sol$ to all strings in group~2 is $n\cdot \exactbound$.

  Analogously, consider a string~$t^{(z)}_j$ from group~3, $j\in \{1,2,\ldots,m\}$ and $z\in \{0,1,2\}$.
  Recall that $t^{(z)}_j$ corresponds to the edge~$e_j$, 
  and let $v_i$ and $v_{i'}$ be the two endpoints of edge~$e_j$.
  We claim that $\sol$ covers $t^{(z)}_j$ exactly once.
  Observe that $t^{(z)}_j$ has exactly $3$ ones in the first $3\hat{n}$~columns, namely at columns~$3i-2+z$, $3i'-2+z$, and $3n+3j-2+z$.
   To prove that $s^*$  covers $t^{(z)}_j$ exactly once, 
   it suffices to show that $\sol$ has $1$ one in exactly one of these three columns.
  To show this, we consider the substrings~$t^{(z)}_j|_{3n+3j-2}^{3n+3j}$ and~$\sol|_{3n+3j-2}^{3n+3j}$. 
\mydescrip{Case 1:} $\sol|_{3n+3j-2}^{3n+3j}=t^{(z)}_j|_{3n+3j-2}^{3n+3j}$.
  By the definition of $\sol$, this implies that $\sol[3n+3j-2+z]=1$ and $\col(e_j)=\{0,1,2\}\setminus \{z\}$.
  We claim that $\sol[3i-2+z]=\sol[3i'-2+z]=0$.
  By the definition of $\sol$ regarding the columns that correspond to the endpoint~$v_i$ of edge~$e_j$, 
  we have that $\sol[3i-2+\col(v_i)]=1$ while $\sol[3i-2+z]=0$ (since $z\notin \col(e_j)=\{\col(v_i), \col(v_{i'})\}$).
  Analogously, by the definition of $\sol$ regarding the columns that correspond to the other endpoint~$v_{i'}$ of edge~$e_j$,
  we have that $\sol[3i'-2+\col(v_{i'})]=1$ while $\sol[3i'-2+z]=0$ (since $z\notin \col(e_j)=\{\col(v_i), \col(v_{i'})\}$).
    Thus, $3n + 3j - z$ is the only column in which both
    $\sol$ and $t^{(z)}_j$ have $1$ one, implying that $\sol$ covers
    $t^{(z)}_j$ exactly once.

    \mydescrip{Case 2:}  $\sol|_{3n+3j-2}^{3n+3j}\neq t^{(z)}_j|_{3n+3j-2}^{3n+3j}$.
    This means that $\sol[3n+3j-2+z]=0$ and that $z\in \col(e_j)$.
    To show that $\sol$ covers $t^{(z)}_j$ exactly once in this case, it suffices to show that
    either $\sol[3i-2+z]=1$ and $\sol[3i'-2+z]=0$, 
    or  $\sol[3i-2+z]=0$ and $\sol[3i'-2+z]=1$.

    \noindent $\bullet$ Assume that $\sol[3i-2+z]=1$.
    Then, by the definition of $\sol$ regarding the columns that correspond to the endpoint~$v_i$ of edge~$e_j$,
    this means that $\col(v_i)=z$.
    Since $\col$ is a proper coloring, it follows that $\col(v_{i'})\neq z$.
    Thus, again by the definition of $\sol$ regarding the columns that correspond to the other endpoint~$v_{i'}$ of edge~$e_j$, it follows that
    $\sol[3i'-2+z]=0$.

    \looseness=-1
    \noindent $\bullet$ Assume that $\sol[3i-2+z]=0$.
    Then, by the definition of $\sol$ regarding the columns that correspond to the endpoint~$v_i$ of edge~$e_j$,
    we have $\col(v_i)\neq z$.
    Since $z\in \col(e_j)$ and $\col$ is a proper coloring,
    the other endpoint~$v_{i'}$ of edge~$e_j$ must have color~$\col(v_{i'})=z$.
    Again, by the definition of $\sol$ regarding the columns that correspond  $v_{i'}$,
    it follows that 
    $\sol[3i'-2+z]=1$.


  \smallskip
  \noindent We have just shown that $\sol$ covers $t^{(z)}_j$ exactly once.
  Since $\sol$ and $t^{(z)}_j$ fulfill the property stated in \cref{claim:exact-cover-cost},
  it follows from the same claim that the sum of $p$-distances from $\sol$ to $t^{(z)}_j$ and to $w^{(z)}_j$ is $\exactbound$.
  There are $3m$ pairs in this group.
  So, the sum of the $p$-distances from $\sol$ to all strings of this group is $3m\cdot \exactbound$.
  
  In total, the sum of the $p$-distances from $\sol$ to all strings of $I'$ is $\gvalue+\exactbound\cdot (n+3m) = k^p$, as required.
\end{proof}
Our NP-hardness reduction implies the following running time lower bounds~\cite{CyFoKoLoMaPiPiSa2015}.
\newcommand{\corethlowerbound}{%
  For each fixed rational number~$p>1$, unless the \ETH\ fails, no $2^{o(\hat{n}+\hat{m})}\cdot |I'|^{O(1)}$-time 
  or $2^{o(k^{\nicefrac{p}{(p+1)}})}\cdot |I'|^{O(1)}$-time 
  algorithm exists that decides every given instance~$I'$ of \pHDC{} where $\hat{n}$ is the length of the input strings,
$\hat{m}$ is the number of input strings,
and $k$ is the $p$-norm bound.%
}
\begin{corollary}[\appsymb]\label[corollary]{cor:ETH-lowerbound}
  \corethlowerbound
\end{corollary}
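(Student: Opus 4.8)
The plan is to transfer the standard \ETH-based lower bound for \TCol{} through the polynomial-time reduction built in the proof of \cref{thm:mphd-np-h}. First I would recall that, unless the \ETH\ fails, \TCol{} on a graph with $n$ vertices and $m$ edges admits no $2^{o(n+m)}$-time algorithm; this follows from the Sparsification Lemma, which lets us assume that the source \textsc{3-SAT} instance has $O(n)$ clauses, composed with the usual linear-size reduction from \textsc{3-SAT} to \TCol{}. The Sparsification Lemma is exactly what upgrades the bound from $2^{o(n)}$ to $2^{o(n+m)}$, so that both the vertex count and the edge count may appear in the exponent.

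Next I would read off the size parameters of the produced instance~$I'$. Each of its strings has length $(2^b+2)(n+m)$, and it contains $2^{a-b}+1+2n+6m$ strings in total. Since $p=a/b>1$ is fixed, the quantities $a$ and $b$ are constants, so both the string length~$\hat{n}$ and the number of strings~$\hat{m}$ are $\Theta(n+m)$, and the whole reduction runs in polynomial time. Hence a hypothetical $2^{o(\hat{n}+\hat{m})}\cdot|I'|^{O(1)}$-time algorithm for \pHDC, composed with the reduction, would decide \TCol{} in $2^{o(n+m)}\cdot(n+m)^{O(1)}=2^{o(n+m)}$ time, contradicting the \ETH; this yields the first lower bound.

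For the second bound I would estimate the growth of the threshold~$k$. Its defining value is $(2^a+2^{a-b})(n+m)^p+2(n+3m)\big(2(n+m)\big)^p$, in which the second summand dominates; hence $k^p=\Theta\big((n+m)^{p+1}\big)$, so $k=\Theta\big((n+m)^{(p+1)/p}\big)$ and therefore $k^{p/(p+1)}=\Theta(n+m)$. Consequently a $2^{o(k^{p/(p+1)})}\cdot|I'|^{O(1)}$-time algorithm would again solve \TCol{} in $2^{o(n+m)}$ time, contradicting the \ETH\ and giving the second lower bound.

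Since the reduction already does the heavy lifting, I do not expect a genuine obstacle here; the only points that need care are invoking the Sparsification Lemma so that the \TCol{} bound is phrased in $n+m$ rather than in $n$ alone, and the short estimate $k^{p/(p+1)}=\Theta(n+m)$, which hinges on $k^p$ being governed by the term that is linear in $n+m$ times $(n+m)^p$, rather than by any higher power of $n+m$.
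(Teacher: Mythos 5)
Your proposal is correct and follows essentially the same route as the paper: both read off that the constructed instance has $\Theta(n+m)$ strings of length $\Theta(n+m)$ and that $k^p=\Theta\big((n+m)^{p+1}\big)$, so $k^{p/(p+1)}=\Theta(n+m)$, and then transfer the $2^{o(n+m)}$ ETH lower bound for \TCol{} through the reduction. Your explicit mention of the Sparsification Lemma is just an unpacking of the textbook citation the paper uses for that \TCol{} bound.
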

\appendixproofwithstatement{cor:ETH-lowerbound}{\corethlowerbound}{
\begin{proof}
  Let $a$ and $b$ be two fixed coprime integers such that $p=a/b$.
  To show our statement, note that we have constructed $2^{a-b}+1+2(n+3m)$ strings for our \pHDClong{} problem in the proof for \cref{thm:mphd-np-h}, each of which has length $(2^b+2)\cdot (n+m)$, where  $n$ and $m$ are the number of vertices and the number edges in the instance of \TCol{}.
  The $p$-norm bound~$k$ was set to $\kvalue$ which is upper-bounded by $\big((1+6\cdot 2^{p})\cdot (n+m)\big)^{\frac{p+1}{p}}$ since $a$ and $b$ are fixed integers.
  Thus, a $2^{o(\hat{n}+\hat{m})}\cdot |I'|^{O(1)}$-time or a $2^{o(k^{\nicefrac{p}{(p+1)}})}$-time algorithm for \pHDClong{} implies a $2^{o(n+m)}\cdot {(n\cdot m)}^{O(1)}$-time algorithm for \TCol{}, which is unlikely unless the \ETH\ fails~\cite[Theorem 14.6]{CyFoKoLoMaPiPiSa2015}.
\end{proof}
}
\noindent Using a slight modification of the construction, we can show
that our results are not idiosyncratic to instances which contain some
strings multiple times. (Recall that the gadget from
\cref{lem:gadget-n} in the construction contains $2^{a-b}$ copies of
the all-zero string.) The basic idea is to append an identity matrix
to the strings we need to distinguish, and then to show using a
slightly more involved analysis that the gadgets still work in the
same way.
\newcommand{\propnphdistinct}{
  \Cref{thm:mphd-np-h,cor:ETH-lowerbound} hold even if all input strings are distinct.
}
\begin{proposition}[\appsymb]\label{prop:nph-distinct}
\propnphdistinct
\end{proposition}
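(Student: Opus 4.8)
The plan is to keep the entire construction of \cref{thm:mphd-np-h} and only break the ties between identical strings by appending a distinguishing block, then verify that every numerical equality the proof relies on shifts in a completely controlled way. Write $N \coloneqq 2^{a-b}+1+2(n+3m)$ for the number of strings built in \cref{thm:mphd-np-h}, and recall that the only repetitions are the $2^{a-b}$ copies of $\zero_{(2^b+2)\hat{n}}$ in group~1. First I would append to the $i$-th string a private new column holding a single~$1$ (and $0$ in all other new columns), i.e.\ an $N\times N$ identity matrix; this alone makes the $N$ strings pairwise distinct. In addition I would append $2^b-1$ further columns in which only the all-one group-1 string $g_0\coloneqq\one_{(2^b+1)\hat{n}}\circ\zero_{\hat{n}}$ carries ones. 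Since $a,b$ are fixed and $N=O(\hat{n})$, the length grows from $(2^b+2)\hat{n}$ only to $(2^b+2)\hat{n}+N+2^b-1=O(\hat{n})$, and the number of strings is unchanged.

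The first analytical step is to show that some optimal solution is~$0$ on every appended column, so that appended bits merely shift Hamming distances by an additive constant. In each appended column exactly one string carries a~$1$, so flipping a~$1$ of $\sol$ to~$0$ there decreases the Hamming distance to $N-1$ strings by one and increases it to a single string by one. Since $x\mapsto x^p$ is convex and increasing, and since in each of the $n+3m$ (near-)complementary pairs at least one string lies at Hamming distance at least $2\hat{n}$ from $\sol$ by the triangle inequality, the gain $\sum_i\big(d_i^p-(d_i-1)^p\big)=\Omega(\hat{n}^p)$ accumulated over the $\Omega(\hat{n})$ far strings dominates the single loss $(d+1)^p-d^p=O(\hat{n}^{p-1})$; hence the flip strictly decreases the objective for all sufficiently large instances (the finitely many small instances are irrelevant, as \TCol\ remains NP-hard on large graphs). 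We may therefore assume $\sol$ is~$0$ on the appended part, whence each string's Hamming distance to $\sol$ equals its old distance plus the number of appended ones it carries: $+1$ for every group-2/3 string and every all-zero copy, and $+2^b$ for $g_0$.

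With these shifts, the remaining bounds fall out of the earlier lemmas almost verbatim. Each group-2/3 pair is now at Hamming distance $4\hat{n}+2$, so applying \cref{lem:complement} (which is stated for arbitrary $R$ and $2L$) with $L=2\hat{n}+1$ gives pair cost at least $2(2\hat{n}+1)^p$, with equality exactly when $\sol$ covers the relevant string exactly once; thus \cref{claim:exact-cover-cost} holds with $\exactbound$ replaced by $2(2\hat{n}+1)^p$. For group~1 I would re-prove the analogue of \cref{lem:gadget-n} with $f(y)=\big((2^b+1)\hat{n}-y+2^b\big)^p+2^{a-b}(y+1)^p$; its derivative vanishes precisely when $2^b(y+1)=(2^b+1)\hat{n}-y+2^b$, i.e.\ at $y=\hat{n}$, and the second derivative is positive, so the unique minimum is $f(\hat{n})=\big(2^b(\hat{n}+1)\big)^p+2^{a-b}(\hat{n}+1)^p=\guniqvalue$. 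The weight~$2^b$ on $g_0$ is chosen exactly so that this minimizer returns to $y=\hat{n}$; this is the sole reason for the extra $2^b-1$ columns.

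Setting $k^p\coloneqq\guniqvalue+2(n+3m)(2\hat{n}+1)^p$, the combinatorial ``covers-exactly-once''-to-coloring equivalence of \cref{thm:mphd-np-h} carries over unchanged, since only the arithmetic thresholds move by the controlled amounts above; hence $G$ is $3$-colorable iff the distinct-string instance admits a solution of cost at most $k^p$. As the number of strings, the string length, and $k=\Theta(\hat{n}^{(p+1)/p})$ retain the asymptotics of \cref{thm:mphd-np-h}, the bounds of \cref{cor:ETH-lowerbound} transfer verbatim. I expect the main obstacle to be exactly the interplay of the previous two paragraphs: a naive identity matrix shifts every distance by one and pushes the group-1 minimizer off $\hat{n}$ whenever $b\ge 2$, and the repair is the asymmetric $2^b:1$ weighting between $g_0$ and the all-zero copies, which restores the exact equality on which the reduction depends.
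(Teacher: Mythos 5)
Your proposal is correct and rests on the same central idea as the paper's own proof: append an identity-like block to break the ties among the $2^{a-b}$ all-zero copies, with the crucial asymmetric weighting---$2^b$ appended ones for $g_0$ versus a single appended one for each copy---chosen precisely so that the group-1 cost function $f(y)=\big((2^b+1)\hat{n}-y+2^b\big)^p+2^{a-b}(y+1)^p$ is again minimized at $y=\hat{n}$ with value $\guniqvalue$; the paper's \cref{claim:last-columns-zeros} performs exactly this computation. You differ in two secondary respects. First, you append a private column to \emph{every} string (a full identity over all $2^{a-b}+1+2(n+3m)$ strings plus $2^b-1$ extra columns for $g_0$), whereas the paper only distinguishes the all-zero copies and appends all-zeros to the already-distinct group-2/3 strings; your variant shifts each pair distance to $4\hat{n}+2$, so you must (and correctly do) recalibrate the per-pair threshold from $\exactbound$ to $2(2\hat{n}+1)^p$ and adjust $k$ accordingly, a complication the paper sidesteps. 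Second, to force the solution to be zero on the appended block you use an asymptotic exchange argument---flipping an appended one gains $\Omega(\hat{n}^{p})$ from the $\Omega(\hat{n})$ far pair-strings against a single loss of $O(\hat{n}^{p-1})$---which is valid only for sufficiently large instances; the paper instead does exact cost accounting (its \cref{claim:mind-cost-group1,claim:bound-exceed}), showing that any one in the appended block already pushes the total strictly above $k^p$ for every instance size. Your route is sound, and the ``sufficiently large'' caveat is harmless both for NP-hardness and for the ETH lower bounds since the string count, length, and $k=\Theta(\hat{n}^{(p+1)/p})$ keep the same asymptotics; the paper's accounting buys exactness and uniformity over all instances, while your exchange argument is arguably more robust to changes in the appended block's shape.
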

\appendixproofwithstatement{prop:nph-distinct}{\propnphdistinct}{
\begin{proof}
  Again, let $a$ and $b$ be two fixed coprime integers such that $p=a/b$.
  To show the statement, we modify the instance that we constructed in the proof of \cref{thm:mphd-np-h} by appending to each string $2^{a-b}+2^b$ columns.
  First, observe that it suffices to distinguish all $2^{a-b}$ all-zero strings in group~1 from one another: All other strings are distinct. 
  We need to preserve, however, the property of the gadget in group~1.
  To do that, intuitively, we attach an identity matrix to the strings in
  group~1, and fill up the remaining strings (in group~2 and group 3) with zeros.
  
  More formally, let $g_0, \ldots, g_{2^{a-b}}$ be the strings
  in group~1, where $g_0$ is the single string with exactly $(2^b+1)\hat{n}$~ones.
  Append to string~$g_0$ the string~$\zero_{2^{a-b}} \circ \one_{2^b}$.
  For each string~$g_i$, $i \in [2^{a-b}]$, append to it the string $\zero_{i-1} \circ 1 \circ \zero_{2^{a-b} + 2^b - i}$.
  Append an all-zero string~$\zero_{2^{a-b}+2^b}$ to each string from group~2 and group~3,
  i.e.,
  to each string~$s_i, r_i$, $i \in [n]$ and each string~$t^{(0)}_{i'}, t^{(1)}_{i'}, t^{(2)}_{i'}, w^{(0)}_{i'},  w^{(1)}_{i'}, w^{(2)}_{i'}$, $i' \in [m]$.
  See \cref{fig:distinct-strings} for an illustration.
  \begin{figure}
    \caption{Illustration of an instance constructed in the proof of \cref{prop:nph-distinct}, where each string is distinct. This instance is obtained by appending to the instance constructed in the proof of \cref{thm:mphd-np-h} some appropriately designed columns (see the last $2^{a-b}+2^b$ columns).}\label{fig:distinct-strings}
    \centering
    \begin{tikzpicture}
      \def \msize {6ex}
      \node[text width = \msize, align=right] (g0) {$g_0:$};
      
      \node[draw, rectangle, right = 4ex of g0, anchor=center] (g01) {$1 1 \cdots 1$};
      \node[draw, rectangle, right = 1ex of g01] (g02) {$1 1 \cdots \cdots 1$};
      \node[draw, rectangle, right = 1ex of g02] (g03) {$0 0 \cdots 0$};
      \node[draw, rectangle, right = 1ex of g03] (g04) {$0 0 \cdots 0$};
      \node[draw, rectangle, right = 1ex of g04] (g05) {$1 1 \cdots 1$};

      \draw[decorate, decoration={brace,amplitude=6pt}] ($(g01.north west)+(0,0.1)$) --  ($(g01.north east)+(0,0.1)$)  node[pos=0.5, above, yshift=2.5ex, anchor=center]  {\small $3\hat{n}$};

      \draw[decorate, decoration={brace,amplitude=6pt}] ($(g02.north west)+(0,0.1)$) --  ($(g02.north east)+(0,0.1)$)  node[pos=0.5, above, yshift=2.5ex, anchor=center] {\small $(2^b-2)\hat{n}$};

      \draw[decorate, decoration={brace,amplitude=6pt}] ($(g03.north west)+(0,0.1)$) --  ($(g03.north east)+(0,0.1)$)  node[pos=0.5, above, yshift=2.5ex, anchor=center] {\small $\hat{n}$};

      \draw[decorate, decoration={brace,amplitude=6pt}] ($(g04.north west)+(0,0.1)$) --  ($(g04.north east)+(0,0.1)$)  node[pos=0.5, above, yshift=2.5ex, anchor=center] {\small $2^{a-b}$};
      
      \draw[decorate, decoration={brace,amplitude=6pt}] ($(g05.north west)+(0,0.1)$) --  ($(g05.north east)+(0,0.1)$)  node[pos=0.5, above, yshift=2.5ex, anchor=center]  {\small $2^{b}$};

      \node[below = 1ex of g0, text width=\msize, align=right] (g1) {$~~g_1:$};
      \node[draw, rectangle, right = 4ex of g1, anchor=center] (g11) {$0 0 \cdots 0$};
      \node[draw, rectangle, right = 1ex of g11] (g12) {$0 0 \cdots \cdots 0$};
      \node[draw, rectangle, right = 1ex of g12] (g13) {$0 0 \cdots 0$};
      \node[draw, rectangle, right = 1ex of g13] (g14) {$1 0 000 0$};
      \node[draw, rectangle, right = 1ex of g14] (g15) {$0 0 \cdots 0$};

      \node[below = 0ex of g1, text width=\msize, align=center] (vdots) {$\vdots$};
      \node[below = 0ex of vdots, text width=\msize, align=right] (gn) {$g_{2^{a-b}}:$};
      \node[below = 1ex of gn, text width=\msize, align=right] (s1) {$s_1:$};
      \node[below = 0ex of s1, text width=\msize, align=center] (vds) {$\vdots$};
      \node[below = 0ex of vds, text width=\msize, align=right] (wm2) {$w_m^{(2)}:$};

      \def \ys {12}
      \def \yss {63}
      {
        \gettikzxy{(g01.west)}{\x}{\y}
        \gettikzxy{(g03.east)}{\xx}{\yy}

        \draw[fill=white] (\x, \y-\ys) rectangle  node {\Huge $0$} (\xx, \yy-\yss);
      }
      
      {
        \gettikzxy{(g04.west)}{\x}{\y}
        \gettikzxy{(g04.east)}{\xx}{\yy}

        \draw[fill=white] (\x, \y-\ys) node[below, xshift=1ex] (ul) {$1$} rectangle (\xx, \yy-\yss) node[above, xshift=-1ex] (lr) {$1$};

        \draw (ul) edge (lr);
        \draw (\xx, \y-\ys) node[below, xshift=-2ex] (ur) {\huge $0$} rectangle (\x, \yy-\yss) node[above, xshift=2ex] (ll) {\huge $0$};
      }

      {
        \gettikzxy{(g05.west)}{\x}{\y}
        \gettikzxy{(g05.east)}{\xx}{\yy}

        \draw[fill=white] (\x, \y-\ys) rectangle node  {\Huge $0$}  (\xx, \yy-\yss);
      }

        \def \ys {69}
        \def \yss {124}
        \gettikzxy{(s1)}{\sx}{\sy}

        {
        \gettikzxy{(g01.west)}{\x}{\y}
        \gettikzxy{(g01.east)}{\xx}{\yy}

       \draw[fill=white] (\x, \y-\ys)  rectangle (\xx, \yy-\yss);

       \node at (\x*0.5+\xx*0.5, \sy) (u1) {$u_1$};
       \node[below=0ex of u1] (vdds) {$\vdots$};
       \node[below=0ex of vdds] (em) {$e_m^{(2)}$};
     }
     
     {
        \gettikzxy{(g02.west)}{\x}{\y}
        \gettikzxy{(g02.east)}{\xx}{\yy}

        \draw[fill=white] (\x, \y-\ys) rectangle node[]  {\Huge $0$}  (\xx, \yy-\yss);
     }

     {
        \gettikzxy{(g03.west)}{\x}{\y}
        \gettikzxy{(g03.east)}{\xx}{\yy}

        \draw[fill=white] (\x, \y-\ys) rectangle  (\xx, \yy-\yss);
       \node at (\x*0.5+\xx*0.5, \sy) (u1) {$01\cdots 1$};
       \node[below=0ex of u1] (vddds) {$\vdots$};
       \node[below=1ex of vddds] (em) {$01\cdots 1$};

     }

        {
        \gettikzxy{(g04.west)}{\x}{\y}
        \gettikzxy{(g05.east)}{\xx}{\yy}

        \draw[fill=white] (\x, \y-\ys) rectangle node[]  {\Huge $0$}  (\xx, \yy-\yss);

      }

    \end{tikzpicture}
  \end{figure}
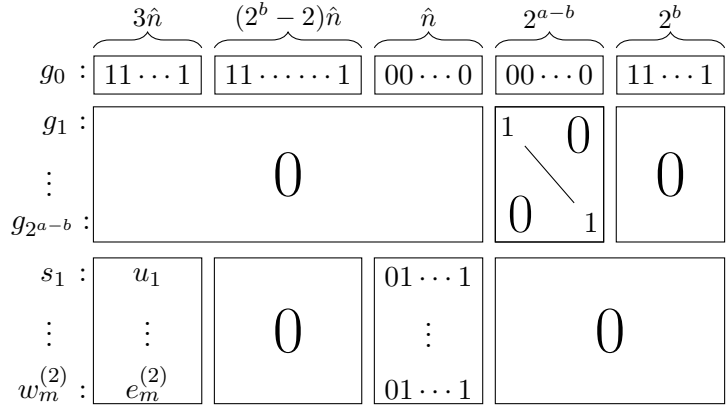
  Finally, we set $k$ to the positive real so that $k^p = \guniqvalue+\distance$; recall that $\hat{n}=n+m$. 
  Note that $k=O((n+m)^{\frac{p+1}{p}})$ still holds as $a$ and $b$ are fixed integers.
  
  For ease of notation we use the overloaded symbols $g_0,g_1,\ldots,g_{2^{a-b}}, s_1,\ldots,s_n$, $r_1,\ldots,r_{n}$, $t^{(z)}_1,\ldots$, $t^{(z)}_{m}, w^{(z)}_{1}, \ldots, w^{(z)}_{m}, z\in \{0,1,2\}$, to refer to the modified strings.
  
  To show that the construction remains correct, we first claim that an arbitrary solution has sum of $p$-distance at least $\guniqvalue$ to the strings of the first group.

    \begin{claim}[\appsymb]\label{claim:mind-cost-group1}
      Let $\sol$ be an arbitrary solution string,
      then the sum of $p$-distances from $\sol$ to all strings of group~$1$ is at least $\guniqvalue$.
    \end{claim}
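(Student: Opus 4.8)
The plan is to reduce the estimate to the single-variable convexity argument already used in the proof of \cref{lem:gadget-n}, after folding in the newly appended columns. Recall that among the group-1 strings $g_0,\ldots,g_{2^{a-b}}$, the string $g_0$ has ones exactly in the first $(2^b+1)\hat n$ original columns and in the last $2^b$ appended columns, whereas each $g_i$ with $i\in[2^{a-b}]$ is all-zero except for a single one in the $i$-th appended identity column. First I would parametrize $\sol$ by letting $y$, $w$, and $Q$ be the number of ones of $\sol$ in the first $(2^b+1)\hat n$ original columns, in the last $2^b$ appended columns, and in the appended identity block, respectively; ones of $\sol$ in the remaining (last $\hat n$ original) columns only increase every Hamming distance considered below and may be discarded. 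A direct count then gives $\hd{}(\sol,g_0)\ge (2^b+1)\hat n+2^b-y-w$, discarding the nonnegative contributions of $g_0$ from the identity block. Writing $q_i\in\{0,1\}$ for the value of $\sol$ in identity column $i$, each row satisfies $\hd{}(\sol,g_i)\ge y+w+(1-2q_i+Q)$, and summing over $i\in[2^{a-b}]$ with $\sum_i q_i=Q$ gives $\sum_{i=1}^{2^{a-b}}\hd{}(\sol,g_i)\ge 2^{a-b}(y+w)+2^{a-b}+(2^{a-b}-2)Q\ge 2^{a-b}(y+w+1)$, where the last inequality uses $2^{a-b}\ge 2$ (as $a>b$).

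Next I would pass from Hamming distances to $p$-distances. Since $(2^b+1)\hat n+2^b-y-w\ge 0$ and $t\mapsto t^p$ is increasing on the nonnegative reals, $\hd{p}(\sol,g_0)\ge\big((2^b+1)\hat n+2^b-y-w\big)^p$; applying Jensen's inequality to the convex map $t\mapsto t^p$ over the $2^{a-b}$ values $\hd{}(\sol,g_i)$ (the same device as in \cref{lem:complement}) yields $\sum_{i=1}^{2^{a-b}}\hd{p}(\sol,g_i)\ge 2^{a-b}(y+w+1)^p$. Adding the two bounds and substituting $v\coloneqq y+w+1$, the total $p$-distance to group~1 is at least $\big((2^b+1)(\hat n+1)-v\big)^p+2^{a-b}v^p$, using the identity $(2^b+1)\hat n+2^b+1=(2^b+1)(\hat n+1)$.

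Finally I would note that this expression is exactly the function $f$ from the proof of \cref{lem:gadget-n} with $\hat n$ replaced by $\hat n+1$. The identical convexity/derivative computation (using only $p>1$) shows it attains its global minimum at $v=\hat n+1$ with value $\big(2^b(\hat n+1)\big)^p+2^{a-b}(\hat n+1)^p=(2^a+2^{a-b})(\hat n+1)^p=\guniqvalue$, which is the claimed bound. I expect the main obstacle to be the bookkeeping in the first step: one must check that the $-2q_i$ terms coming from matching individual identity rows aggregate, across all rows, into the harmless nonnegative coefficient $2^{a-b}-2$, and that the shift $v=y+w+1$ reproduces precisely the $\hat n\mapsto\hat n+1$ variant of the gadget function, so that the estimate collapses to a single convex function whose minimizer is already understood.
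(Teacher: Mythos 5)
Your proof is correct, and it takes a mildly but genuinely different route from the paper's. The paper splits into two cases according to whether $\sol$ has a one in the appended identity block: in each case it bounds $\hd{}(\sol,g_0)$ and each $\hd{}(\sol,g_i)$ by expressions in a single variable $x$ (your $y+w$), obtaining two distinct one-variable convex functions $f(x)=2^{a-b}x^p+((2^b+1)\hat n+2^b-x+1)^p$ and $g(x)=2^{a-b}(x+1)^p+((2^b+1)\hat n+2^b-x)^p$, and minimizes each separately via first and second derivatives, finding that both attain the value $\guniqvalue$. You instead avoid the case distinction entirely: the row-by-row count $\hd{}(\sol,g_i)\ge y+w+1-2q_i+Q$ aggregates, using $2^{a-b}\ge 2$, into a uniform bound $\sum_i\hd{}(\sol,g_i)\ge 2^{a-b}(y+w+1)$, and one application of Jensen's inequality (needed because your per-row lower bounds are not all equal, unlike the paper's) collapses everything into the single function $\big((2^b+1)(\hat n+1)-v\big)^p+2^{a-b}v^p$, which is exactly the gadget function of \cref{lem:gadget-n} with $\hat n$ replaced by $\hat n+1$. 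What your version buys is a cleaner structural explanation of why the answer is the Lemma~1 bound shifted by one, via the identity $(2^b+1)\hat n+2^b+1=(2^b+1)(\hat n+1)$, and a single convex minimization instead of two; what the paper's version buys is that it needs no Jensen step and each case is an immediate pointwise bound. Both are complete; the only bookkeeping you must (and do) verify is that the $-2q_i$ terms cannot spoil the aggregate bound and that $v=\hat n+1$ lies in the admissible range $[1,(2^b+1)(\hat n+1)]$.
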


    \begin{claimproof}[of \cref{claim:mind-cost-group1}]

      Let $x$ denote the number of ones of solution~$\sol$ in the columns of $\{1,\ldots, (2^{b}+1)\cdot \hat{n}, (2^{b}+2)\hat{n}+2^{a-b}+1,  \ldots, (2^{b}+2)\hat{n}+2^{a-b}+2^b\}$ with $0\le x \le (2^{b}+1)\hat{n}+2^{b}$.
      To show the statement, we distinguish between two cases,
      depending on whether $\sol$ contains a one in the column range~$[(2^b+2)\hat{n}+1, (2^b+2)\hat{n}+2^{a-b}]$.
      
      \mydescrip{Case 1:}~$\hs(\sol|_{(2^b+2)\hat{n}+1}^{(2^b+2)\hat{n}+2^{a-b}}, \zero_{2^{a-b}})\neq \emptyset$, that is, $\sol$ contains a one in the column range~$[(2^b+2)\hat{n}+1, (2^b+2)\hat{n}+2^{a-b}]$.
        In this case, it holds that $\hd{}(\sol, g_0) \ge (2^b+1)\hat{n}+2^b-x+1$ and
        for each $i\in [2^{a-b}]$ it holds that $\hd{}(\sol, g_i) \ge x$.
        So, the sum of the $p$-distances between $\sol$ and the strings of the first group is at least:
      \begin{align*}
        \sum_{i=0}^{2^{a-b}}\hd{p}(g_i,\sol) \ge 2^{a-b}\cdot x^p
        + ((2^b+1)\hat{n}+2^b-x+1)^p =: f(x).
      \end{align*}
      To derive a lower bound on the above cost, we use a proof similar to one given for \cref{lem:gadget-n},
      but utilizing the first and the second derivatives of $f(x)$:
      \begin{align*}
           \frac{df}{dx} & =  p\cdot 2^{a-b}\cdot x^{p - 1} -
                           p\cdot ((2^b+1)\hat{n}+2^b-x+1)^{p-1}   \\
                         & = p \cdot \Big( (2^bx)^{p-1}  - \big((2^b+1)\hat{n} + 2^b +1 - x\big)^{p-1}\Big).                           
      \end{align*}
      Now observe that the first derivative shown above is zero when
      $(2^bx)^{p-1}  - \big((2^b+1)\hat{n} + 2^b +1 - x\big)^{p-1}= 0$ because $p>1$.
      Solving the equation, the first derivative is zero when $x = \hat{n}+1$.
      Furthermore, the second
      derivative $\frac{d^2f}{dx^2}$ is positive at $x = \hat{n} + 1$ since $p>1$, 
      meaning that $f(x)$ has a local minimum at this point.
      The minimum value is thus $f(\hat{n}+1)=2^{a-b}(\hat{n}+1)^p+(2^b(\hat{n}+1))^{p}=(2^{a-b}+2^a)\cdot (\hat{n}+1)^p$.

   \mydescrip{Case 2:}
   Analogously, we consider the case when~$\hs(\sol|_{(2^b+2)\hat{n}+1}^{(2^b+2)\hat{n}+2^{a-b}}, \zero_{2^{a-b}})=\emptyset$. 
   In this case, it holds that $\hd{}(\sol, g_0) \ge (2^b+1)\hat{n}+2^b-x$ and
   for each $i\in [2^{a-b}]$ it holds that $\hd{}(\sol, g_i) \ge x+1$.
   Thus, the sum of the $p$-distances between $\sol$ and the strings of the first group is at least:
      \begin{align*}
        \sum_{i=0}^{2^{a-b}}\hd{p}(g_i,\sol) \ge 2^{a-b}\cdot (x+1)^p
        + ((2^b+1)\hat{n}+2^b-x)^p =: g(x).
      \end{align*}
      To derive a lower bound on the above cost, we use a proof similar to one given for \cref{lem:gadget-n},
      but utilizing the first and the second derivatives of $g(x)$:
            \begin{align*}
           \frac{dg}{dx} & =  p\cdot 2^{a-b}\cdot (x+1)^{p - 1} -
                           p\cdot ((2^b+1)\hat{n}+2^b-x)^{p-1}   \\
                         & = p \cdot \Big( \big(2^b\cdot (x+1)\big)^{p-1}  - \big((2^b+1)\hat{n}+2^b-x\big)^{p-1} \Big).
            \end{align*}
      Now observe that the first derivative shown above is zero when
      $\big(2^b\cdot (x+1)\big)^{p-1}  - \big((2^b+1)\hat{n} + 2^b - x\big)^{p-1}= 0$ because $p>1$.
   Solving the equation, the first derivative is zero when $x = \hat{n}$.
   Furthermore, the second
   derivative $\frac{d^2g}{dx^2}$ is positive at $x = \hat{n}$ (note that $p>1$),
   meaning that $g(x)$ has a local minimum at this point.
   The minimum value is thus $g(\hat{n})=2^{a-b}(n+1)^p+(2^b(\hat{n}+1))^{p}=(2^{a-b}+2^{a})\cdot (\hat{n}+1)^{p}$.

   \medskip
   
   \noindent  Summarizing, the sum the $p$-distances from $\sol$ to all strings from group~1 is at least $\guniqvalue$. \hfill~{(of \cref{claim:mind-cost-group1})~$\diamond$}
 \end{claimproof}

    Now, we prove that any solution string where the last $2^{a-b}+2^b$ columns have at least $1$ one will exceed our cost~$k$.
  \begin{claim}\label{claim:bound-exceed}
    Let $\sol$ be a solution with~$\sol|_{(2^{b}+2)\hat{n}+1}^{(2^b+2)\hat{n}+2^{a-b}+2^b}\neq \zero_{2^{a-b}+2^b}$, then
    the sum of $p$-distances from $\sol$ to the modified strings is larger than $k^p$.
  \end{claim}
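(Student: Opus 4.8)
The plan is to lower-bound the total $p$-distance by splitting it into the contribution from group~1 and the contribution from groups~2 and~3, and to extract a \emph{strict} surplus from the latter using the ones that $\sol$ places in the appended columns. Let $d \coloneqq \hd{}\bigl(\sol|_{(2^b+2)\hat{n}+1}^{(2^b+2)\hat{n}+2^{a-b}+2^b},\, \zero_{2^{a-b}+2^b}\bigr)$ denote the number of ones of $\sol$ among the last $2^{a-b}+2^b$ columns; by the hypothesis of the claim we have $d \ge 1$. First I would invoke \cref{claim:mind-cost-group1} to bound the cost of $\sol$ to group~1 from below by $\guniqvalue$, independently of where $\sol$'s ones lie. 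This disposes of one summand of the total without any further work, and it already accounts for any ones that $\sol$ may place in the last $2^b$ appended columns, which are exactly the columns counted in the quantity~$x$ of that claim.

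Next I would handle groups~2 and~3, which together consist of $n+3m$ pairs that, on the first $(2^b+2)\hat{n}$ columns, have the form $\{s,\overline{s}\}$ and that carry $\zero_{2^{a-b}+2^b}$ on the appended columns, so each such pair still satisfies $\hd{}(s,\overline{s}) = 4\hat{n}$ while both members are all-zero on the appended part. The crucial observation is that each of the $d$ ones of $\sol$ in the appended columns contributes equally to the Hamming distance of $\sol$ to \emph{both} members of a pair. Writing $s,s'$ for such a pair, I would therefore argue, by the triangle inequality on the first $(2^b+2)\hat{n}$ columns together with the $2d$ forced by the appended columns, that $\hd{}(\sol, s) + \hd{}(\sol, s') \ge 4\hat{n} + 2d$. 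Convexity of $x \mapsto x^p$ (Jensen's inequality, exactly as in the proof of \cref{lem:complement}) then upgrades this to
\[
  \hd{p}(\sol, s) + \hd{p}(\sol, s') \ \ge\ 2\Bigl(\tfrac{\hd{}(\sol,s)+\hd{}(\sol,s')}{2}\Bigr)^{p} \ \ge\ 2\,(2\hat{n}+d)^{p}.
\]
Summing over all $n+3m$ pairs gives a cost to groups~2 and~3 of at least $(n+3m)\cdot 2(2\hat{n}+d)^p$, which, since $d \ge 1$ and $p>1$, is strictly larger than $(n+3m)\cdot 2\,(2\hat{n})^p = \distance$.

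Finally I would add the two bounds: the total cost of $\sol$ to the modified strings is at least $\guniqvalue + (n+3m)\cdot 2(2\hat{n}+d)^p > \guniqvalue + \distance = k^p$, which is the desired statement. The main obstacle is the middle step: \cref{lem:complement} as stated only yields the weak per-pair bound $2\,(2\hat{n})^p$ and so cannot be applied as a black box, because $\hd{}(s,s')$ remains $4\hat{n}$ after appending. The strict surplus must instead be recovered by re-running its convexity argument with the enlarged distance sum $4\hat{n}+2d$ in place of $4\hat{n}$, which is precisely where the assumption $d\ge1$ is spent; the remaining arithmetic is routine.
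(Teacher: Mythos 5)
Your proposal is correct and follows essentially the same route as the paper: the same split into group~1 (handled by \cref{claim:mind-cost-group1}) plus the $n+3m$ complementary pairs of groups~2 and~3, with the strict surplus on each pair extracted from the ones in the appended columns via convexity. The only cosmetic difference is that the paper gets the per-pair strict inequality by observing that at least one member of each pair is at Hamming distance more than $2\hat{n}$ from $\sol$ and then invoking \cref{lem:complement}\eqref{lem-complement:3} as a black box (so, contrary to your closing remark, that lemma does suffice as stated), whereas you re-run the Jensen argument with the enlarged distance sum $4\hat{n}+2d$ to obtain the slightly sharper bound $2(2\hat{n}+d)^p$.
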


  \begin{claimproof}[of \cref{claim:bound-exceed}]
    We derive the $p$-distances to group~$1$, and to groups~$2$ and $3$, separately.
    From \cref{claim:mind-cost-group1}, stating that the sum of $p$-distances to group~1 is at least $\guniqvalue$.

   Now, let us consider an arbitrary pair~$s_i$ and $r_i$ (resp.\ $t^{(z)}_j$ and $w^{(z)}_j$) of strings, representing them by $s$ and $s'$.
    By construction, it holds that $\hd{}(s,s')=4\hat{n}$ and $\hs(s,s')\subseteq \{1,\ldots,3\hat{n},(2^{b}+1)\hat{n}+1,\ldots, (2^{b}+2)\hat{n}\}$.
    However, by assumption that $\sol|_{(2^{b}+2)\hat{n}+1}^{(2^b+2)\hat{n}+2^{a-b}+2^b}\neq \zero_{2^{a-b}+2^b}$, at least one of the strings from $\{s,s'\}$ has Hamming distance more than $2\hat{n}$ to $\sol$.
    By 
    \cref{lem:complement}\eqref{lem-complement:3}, it follows that
    $\hd{p}(\sol,s)+\hd{p}(\sol,s')> 2\cdot (2\hat{n})^p$.
    Since we have $n+3m$ such pairs from group~2 and group~3,
    it follows that the sum of $p$-distances to group~2 and group~3 is more than $\distance$.

    In total, the sum of $p$-distances from such a string~$\sol$ to the modified strings exceeds $\guniqvalue+\distance$.%
    \hfill~{(of \cref{claim:bound-exceed})~$\diamond$}  \end{claimproof}

  By \cref{claim:bound-exceed}, we may assume that the last $2^{a-b} + 2^b$ columns in a
  solution (with cost $k^p$) contain only zeros, it now follows that the Hamming
  distance of a solution to each string in the constructed instance in
  the proof of \cref{thm:mphd-np-h} remains the same after our
  modifications---except for those distances that relate to the $g_i$.
 It hence remains to show that an analog of \cref{lem:gadget-n}
  remains valid in which the gadget's strings are appended with an
   identity matrix as above and $\sol$ contains only zeros in the last
   $2^{a-b} + 2^b$ columns.
    \begin{claim}\label{claim:last-columns-zeros}
      Let $\sol\in \{0,1\}^{(2^{b}+2)\hat{n}+2^{a-b}+2^b}$ be a solution with $\sol|_{(2^{b}+2)\hat{n}+1}^{(2^b+2)\hat{n}+2^{a-b}+2^b}=\zero_{2^{a-b}+2^b}$.
     Then the following holds.
       \begin{enumerate}[(1)]
         \item If $\hs(\sol,\zero_{(2^b+2)\hat{n}+2^{a-b}+2^b}) \subseteq [(2^b+1)\cdot \hat{n}]$
         and $\hd{}(\sol,\zero_{(2^b+2)\hat{n}+2^{a-b}+2^b}) = \hat{n}$,
         then $\sum_{i=0}^{2^{a-b}}\hd{p}(\sol, g_i)= \guniqvalue$.
    \item Otherwise, $\sum_{i=0}^{2^{a-b}}\hd{p}(\sol, g_i) > \guniqvalue$.
  \end{enumerate}
\end{claim}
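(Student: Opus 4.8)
The plan is to mirror the analysis of \cref{lem:gadget-n} and to reuse the single-variable function already studied in the proof of \cref{claim:mind-cost-group1}. First I would compute the two relevant Hamming distances exactly. Writing $y$ for the number of ones of~$\sol$ in the columns~$[(2^b+1)\hat{n}]$ and $z$ for the number of ones in the columns~$(2^b+1)\hat{n}+1,\ldots,(2^b+2)\hat{n}$, and using the hypothesis that $\sol$ is zero on the last $2^{a-b}+2^b$ columns, the appended identity block contributes exactly~$1$ to the distance to each~$g_i$ with $i\in[2^{a-b}]$ and $0$ to the distance to~$g_0$, while the appended block~$\one_{2^b}$ contributes exactly~$2^b$ to the distance to~$g_0$. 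Hence
\[
  \hd{}(\sol,g_0)=(2^b+1)\hat{n}-y+z+2^b \quad\text{and}\quad \hd{}(\sol,g_i)=y+z+1 \ \ (i\in[2^{a-b}]),
\]
so that $\sum_{i=0}^{2^{a-b}}\hd{p}(\sol,g_i)=\big((2^b+1)\hat{n}+2^b-y+z\big)^p+2^{a-b}\,(y+z+1)^p$.

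For statement~(1) I would simply substitute the hypotheses: together with the zero-suffix assumption they force $z=0$ and $y=\hat{n}$, so the sum equals $\big(2^b(\hat{n}+1)\big)^p+2^{a-b}(\hat{n}+1)^p$; since $p=a/b$ gives $2^{bp}=2^a$, this is exactly $\guniqvalue$.

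For statement~(2) the key observation is that both summands are strictly increasing in~$z$ (each base is positive, and $t\mapsto t^p$ is strictly increasing for $p>1$), so for a fixed~$y$ the sum is minimized at $z=0$ and is \emph{strictly} larger as soon as $z>0$. It therefore remains to treat the case $z=0$, where the sum becomes the one-variable function $g(y)=\big((2^b+1)\hat{n}+2^b-y\big)^p+2^{a-b}(y+1)^p$ analyzed in the proof of \cref{claim:mind-cost-group1} (Case~2). There $g$ was shown to have a stationary point at $y=\hat{n}$ with $g(\hat{n})=\guniqvalue$; since $g''>0$ throughout $[0,(2^b+1)\hat{n}]$ (both bases remain positive and $p>1$), the function $g$ is strictly convex, so $y=\hat{n}$ is its \emph{unique} minimizer and $g(y)>\guniqvalue$ for every $y\neq\hat{n}$. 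Putting the two cases together, whenever $(y,z)\neq(\hat{n},0)$—which, under the zero-suffix assumption, is exactly the negation of the hypothesis of~(1)—the sum strictly exceeds $\guniqvalue$.

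I expect the only genuine subtleties to be bookkeeping: turning the monotonicity-in-$z$ step into a strict rather than weak inequality, and explicitly upgrading the ``local minimum'' recorded in \cref{claim:mind-cost-group1} to a unique global minimum via strict convexity. Both are routine for $p>1$ precisely because every base occurring is bounded away from~$0$ on the relevant domain.
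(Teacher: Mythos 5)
Your proposal is correct and follows essentially the same route as the paper: an exact calculation for statement (1), and for statement (2) a split into the case of ones in columns $(2^b+1)\hat{n}+1,\ldots,(2^b+2)\hat{n}$ versus the case $y\neq\hat{n}$, settled by the derivative analysis of the one-variable function $\big((2^b+1)\hat{n}+2^b-y\big)^p+2^{a-b}(y+1)^p$. Your version is in fact slightly tidier than the paper's, since you track the middle-column ones with an explicit variable~$z$ and upgrade the stationary-point argument to a unique global minimum via strict convexity, whereas the paper only records a positive second derivative at $y=\hat{n}$ and a separate ad hoc inequality for the $z>0$ case.
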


\begin{claimproof}[of \cref{claim:last-columns-zeros}]
  The first statement follows by a straight-forward calculation.
  
  The proof for the second statement is analogous to the one given for \cref{lem:gadget-n}.
  Again, let $y$ denote the number of ones in~$\sol$ in the first $(2^b+1)\cdot \hat{n}$ columns.  
 Then, $\sum_{i=0}^{2^{a-b}}\hd{p}(\sol, g_i) \ge ((2^{b}+1)\hat{n}-y+2^b)^{p} + 2^{a-b}\cdot (y+1)^p$; note that by assumption, $\sol$ has zeros in the last $2^{a-b}+2^{b}$~columns.
 We define a function
 \[f \colon [0,(2^b+1)\cdot \hat{n}] \to \mathds{Z} \text{ with } f(y)\coloneqq  ((2^{b}+1)\hat{n}-y+2^b)^{p} + 2^{a-b}\cdot (y+1)^p,\]
  and show that this function attains its sole integer minimum over $[0, (2^b+1)\cdot \hat{n}]$ at $y=\hat{n}$.
  Note that this function is a lower bound on the sum of $p$-distances of $\sol$ to the first group of strings.
  First of all, the first derivative of $f$ with respect to~$y$ is 
  \begin{align*}
    \frac{d f}{d y} & = -p\cdot \big((2^{b}+1)\hat{n}-y+2^b\big)^{p-1} + p\cdot2^{a-b}\cdot (y+1)^{p-1}\\
                    & = p \cdot \Big( 2^{a-b}\cdot (y+1)^{p-1} -
                      \big((2^b+1)\cdot \hat{n} + 2^b -y\big)^{p-1}
                      \Big).
  \end{align*}
  Now, observe that the first derivative is zero only when the second component shown above is zero:
$2^{a-b}\cdot (y+1)^{p-1} -
                      \big((2^b+1)\cdot \hat{n} + 2^b -y\big)^{p-1} = 0$ because $p>1$.
Solving the equation, we obtain that the first derivative is zero when $y = \hat{n}$.
The second derivative of $f$ respect to $y$ is
 \[\frac{d^2f}{dy^2}=
   p \cdot (p-1)\cdot \Big(
   2^{a-b}\cdot (y+1)^{p-2} + \big( (2^b+1)\cdot \hat{n} - y + 2^b \big)^{p-2}
   \Big),\]
which is positive at $y=\hat{n}$
(recall that $p > 1$).
Hence, indeed, the sole
minimum of $f(y)$ over $[0, (2^b + 1)\cdot \hat{n}]$ is attained at $y = \hat{n}$ with
$f(\hat{n})= \big(2^b\cdot (\hat{n}+1)\big)^p + 2^{a-b}\cdot (\hat{n}+1)^p = (2^a+a^{a-b})\cdot (\hat{n}+1)^p$.

To summarize,
if $\hs(\sol,\zero_{(2^b+2)\hat{n}+2^{a-b}+2^b}) \subseteq [(2^b+1)\cdot \hat{n}]$
but $\hd{}(\sol,\zero_{(2^b+2)\hat{n}+2^{a-b}+2^b}) \neq \hat{n}$,
then $\sum_{i=0}^{2^{a-b}}\hd{p}(\sol,g_i) = f(y) > f(\hat{n})=(2^a+2^{a-b})\cdot (\hat{n}+1)^p$.
If $\hs(\sol,\zero_{(2^b+2)\hat{n}})\nsubseteq [(2^b+1)\hat{n}]$,
then $\sol$ has at least $1$ one in the column range~$[(2^b+1)\cdot \hat{n}+1, (2^b+2)\cdot \hat{n}]$.
Since each string from the first group has zeros in all these columns,
it follows that $ \sum_{i=0}^{2^{a-b}}\hd{p}(\sol,g_i)  \ge
 ((2^{b}+1)\hat{n}-y+2^b+1)^{p} + 2^{a-b}\cdot (y+2)^p> f(y) \ge f(\hat{n})=(2^a+2^{a-b})\cdot (\hat{n}+1)^p$.%
~\hfill~{(of \cref{claim:last-columns-zeros})~$\diamond$}\end{claimproof}
 \noindent  By the above claim, the correctness of our modified construction follows immediately. 

  As for the lower bound, no additional string is added to the new construction, and the length of the modified strings is increased by $2^{a-b}+2^b$, which is a constant.
  Moreover, as already observed, $k\in O((n+m)^{\frac{p+1}{p}})$.

  Altogether, we obtain the same ETH-based lower bounds, even if all input strings are distinct.
\end{proof}
}

  
  Let \textsc{$p$-Norm Approval Committee} be the variant of \pHDC\ in
  which we additionally get $t \in \mathds{N}$ as an input and require the number of ones in the solution
  string~$\sol$ to be exactly~$t$ \cite{sivarajan_generalization_2018}.
  Note that in the proof of \cref{thm:mphd-np-h} we have first shown
  that each solution string contains exactly~$\hat{n}$ ones. Thus, the
  reduction works in the same way for \textsc{$p$-Norm Approval
    Committee} when we specify $t = \hat{n}$ in the constructed
  instance. We hence obtain the following.

  \begin{corollary}\label{cor:committee-nph}
    For each fixed rational~$p > 1$, \textsc{$p$-Norm Approval
      Committee} is NP-hard and admits no algorithm
    running in $2^{o(\hat{n}+\hat{m})}\cdot |I'|^{O(1)}$-time or in
    $2^{o(k^{\nicefrac{p}{(p+1)}})}\cdot |I'|^{O(1)}$-time unless the
    \ETH\ fails, where $\hat{n}$ is the number of candidates,
    $\hat{m}$ is the number of voters, and $k$ is the $p$-norm bound.
  \end{corollary}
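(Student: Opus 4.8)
The plan is to reuse the reduction from \cref{thm:mphd-np-h} essentially verbatim, supplementing the constructed \pHDC{} instance with the committee-size requirement $t = \hat{n}$. The key observation, already noted in the discussion preceding the statement, is that the gadget of \cref{lem:gadget-n} forces \emph{every} solution string meeting the $p$-norm bound~$k$ to have exactly $\hat{n}$ ones. Hence the extra constraint $t=\hat{n}$ neither rules out the intended solution nor opens the door to spurious ones, so the yes/no equivalence established in \cref{thm:mphd-np-h} is preserved.

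Concretely, I would argue the two directions as follows. For the forward implication, if the input graph~$G$ admits a proper $3$-coloring, then the solution string~$\sol$ built in the ``only if'' direction of \cref{thm:mphd-np-h} achieves the bound~$k$ and, by its very construction (exactly $n$ ones in the vertex block and $m$ ones in the edge block), contains exactly $\hat{n}=n+m$ ones; it is therefore a feasible \textsc{$p$-Norm Approval Committee} solution for $t=\hat{n}$. For the backward implication, any committee solution with exactly $\hat{n}$ ones and $p$-norm at most~$k$ is in particular a feasible \pHDC{} solution of cost at most~$k$, so the ``if'' direction of \cref{thm:mphd-np-h} immediately yields a proper $3$-coloring of~$G$. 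This settles NP-hardness. The same reasoning applies after passing to the all-distinct construction of \cref{prop:nph-distinct}, where \cref{claim:last-columns-zeros} again forces every feasible solution to carry exactly~$\hat{n}$ ones.

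For the running-time lower bounds I would observe that the modified instance has the same number of strings and the same string length as the \pHDC{} instance of \cref{cor:ETH-lowerbound} (we only append the single integer~$t$), so the parameters $\hat{n}$, $\hat{m}$, and~$k$ are unchanged; consequently the bounds on $\hat{n}+\hat{m}$ and on~$k^{\nicefrac{p}{(p+1)}}$ transfer without modification. The only point requiring care---and the nearest thing to an obstacle---is confirming that imposing $t=\hat{n}$ does not accidentally shrink the feasible set below the intended solution; but this is precisely what \cref{lem:gadget-n} guarantees, so no further argument is needed.
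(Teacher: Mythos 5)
Your proposal is correct and matches the paper's own argument: the paper likewise observes that the group-1 gadget forces every bound-meeting solution to have exactly $\hat{n}$ ones, sets $t=\hat{n}$, and reuses the reduction of \cref{thm:mphd-np-h} together with the unchanged instance parameters for the ETH bounds. Your write-up is merely more explicit about the two directions of the equivalence.
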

\section{Algorithmic results}\label{sec:algo}
We now turn to our positive results. In \cref{sec:subexp} we provide
an efficient algorithm when the objective value~$k$ is small.
In \cref{sec:FPT-m}, we derive an integer convex
programming formulation to obtain an efficient algorithm for instances where
the number~$m$ of input strings is small. Finally, we give a simple
2-approximation in \cref{sec:2approx}.

\subsection{A subexponential-time algorithm}\label{sec:subexp}
\appendixsection{sec:subexp} In this section, we present an algorithm
with running
time~$2^{k^{\nicefrac{p}{(p+1)}+\epsilon}}\cdot |I|^{O(1)}$ for any
$\epsilon > 0$ and input instance~$I$ with distance bound~$k$. By the
lower bound result given in \cref{cor:ETH-lowerbound}, we know that
under ETH, the running time of the obtained algorithm is tight.

The algorithm is built on two subcases, distinguishing on a relation between the number~$m$ of input strings and the distance bound~$k$.
In each subcase we use a distinct algorithm that runs in subexponential time when restricted to that subcase.
To start with,
a dynamic programming algorithm which keeps track of the achievable vector of Hamming distances to each input string after columns $1$ to $j \leq n$ has running time~$O(n\cdot k^m)$.




\newcommand{\lemdpm}{%
  \pHDC{} can be solved in $O(n\cdot k^m)$ time and space, where $m$ and $n$ are the number and the length of the input strings, respectively, and $k$ is the p-norm distance bound.
}
\begin{lemma}[\appsymb]\label[lemma]{lem:dp-m}
\lemdpm
\end{lemma}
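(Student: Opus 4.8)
The plan is to solve \pHDC{} by a left-to-right dynamic program over the $n$ columns that stores, for each candidate assignment to the first~$j$ bits of the solution, only the induced vector of partial Hamming distances to the $m$ input strings. The observation that keeps the state space small is that any feasible solution~$\sol$ satisfies $\hd{}(\sol,s_i)\le k$ for every $i\in[m]$, since $\hd{}(\sol,s_i)^p \le \sum_{s\in S}\hd{p}(\sol,s)\le k^p$. Hence it suffices to track distance vectors $\vec d = (d_1,\dots,d_m)$ with each coordinate in $\{0,1,\dots,\flo{k}\}$, of which there are at most $(\flo{k}+1)^m$, matching the $O(k^m)$ factor in the statement.

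Concretely, I would set up a Boolean table~$T$ indexed by a column~$j\in\{0,\dots,n\}$ and a distance vector~$\vec d\in\{0,\dots,\flo{k}\}^m$, where $T[j][\vec d]$ is true exactly when some prefix~$\sol|^{j}_{1}$ achieves $\hd{}(\sol|^{j}_{1},s_i|^{j}_{1})=d_i$ for all~$i\in[m]$. The base case sets $T[0][\zero]$ to true (the empty prefix realizes the all-zero vector) and everything else to false. For the transition from column~$j-1$ to~$j$, for each~$\vec d$ with $T[j-1][\vec d]$ true and each bit~$c\in\{0,1\}$, I form the successor vector~$\vec d'$ by putting $d_i'=d_i+1$ if $c\neq s_i[j]$ and $d_i'=d_i$ otherwise; if every coordinate of~$\vec d'$ is at most~$\flo{k}$ I set $T[j][\vec d']$ to true, and otherwise discard the extension (the offending coordinate can only grow further). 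Finally I would accept iff some~$\vec d$ with $T[n][\vec d]$ true satisfies $\sum_{i\in[m]} d_i^{p}\le k^p$, and an actual solution string can be recovered by storing with each true entry the chosen bit together with a predecessor pointer.

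The correctness rests on the distance vector being a sufficient statistic: each column contributes to each coordinate~$d_i$ independently of the other columns and of which earlier bits were chosen, so any two prefixes inducing the same~$\vec d$ admit exactly the same achievable full distance vectors---and hence the same objective values---when extended over columns $j+1,\dots,n$. Collapsing all such prefixes into one state therefore preserves feasibility, and discarding states with a coordinate exceeding~$\flo{k}$ is safe by the bound above. A routine induction on~$j$ then establishes that $T[j][\vec d]$ is true precisely when some length-$j$ prefix realizes~$\vec d$.

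For the resource bounds there are $n+1$ layers, each holding at most $(\flo{k}+1)^m$ reachable vectors, and each vector has only two transitions, computable in $O(m)$ time; absorbing the factor~$m$ into the polynomial overhead yields the claimed $O(n\cdot k^m)$ time and space. I do not anticipate a genuine obstacle here: the only delicate points are the a~priori bound~$\hd{}(\sol,s_i)\le k$ that confines every coordinate to~$\{0,\dots,\flo{k}\}$, and the sufficient-statistic argument that justifies remembering distance vectors rather than full prefixes.
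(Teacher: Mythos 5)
Your proposal is correct and follows essentially the same column-by-column dynamic program over $m$-tuples of partial Hamming distances that the paper uses, with the same state space, transitions, pruning, and final acceptance test $\sum_i d_i^p\le k^p$. The only difference is a small preprocessing step: to make the state count literally $k^m$ rather than $(\lfloor k\rfloor+1)^m$, the paper first handles separately the case where some input string is at Hamming distance exactly $k$ from the solution (then every other string must equal the solution, which is checked directly), so that each coordinate can be confined to $\{0,\dots,k-1\}$.
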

\appendixproofwithstatement{lem:dp-m}{\lemdpm}{
\begin{proof}
  Let $I=(S,k)$ be an instance of \pHDC{} with $S=(s_1,\dots,s_m)$ being the input strings of length $n$ and $k$ being the p-norm distance bound.
  First of all, it is obvious that if $I$ is a yes-instance and $\sol$ is a solution for $I$, meaning that $\objectivel{S}{\sol} \le k^p$,
  then the Hamming distance between $\sol$ and each input string~$s_i\in S$ must not exceed $k$. To ease notation and slightly improve the running time, we reduce to the case where this distance does not exceed~$k - 1$. Indeed, if there is an input string $s_i$ such that the $p$-distance between $\sol$ and $s_i$ is exactly~$k$, then there is another input string~$s_j$ such that the $p$-distance between~$\sol$ and~$s_j$ is zero. We can check whether there exists a solution which is equal to some input string in $O(nm)$ time. Thus, we reduce to the case where the $p$-distance between $\sol$ and each input string~$s_i$ does not exceed~$k - 1$.

  Based on the above observation, we can design a dynamic program that keeps track, for each $m$-tuple of Hamming distances, whether there is a partial solution that ``fulfills'' these Hamming distances.
  More precisely, our dynamic-programming table~$T$ stores 
  for each $m$-tuple~$(d_1,\ldots,d_m) \in \{0, 1, \ldots, k - 1\}^m$ and each column index~$j$, 
  whether there is a partial solution of length~$j$ 
  that has Hamming distance~$d_i$ to each input string~$s_i$
  when restricted to only the first $j$ columns.

  For each tuple~$D=(d_1,\ldots, d_m)\in \{0, 1, \ldots, k - 1\}^{m}$,
  we set $T(D,1)=\textsf{true}$ if $D=(s_i[1])_{1\le i \le m}$ or  $D=(1-s_i[1])_{1\le i \le m}$ and $T(D,1)=\textsf{false}$ otherwise.
  Then, for each column index~$j \ge 2$ in increasing order, 
  we set $T(D,j)= T(D_1,j-1) \vee T(D_2,j-1)$ where $D_1,D_2\in \{0, 1, \ldots, k - 1\}^m$
  such that $D_1 = (d_i - s_i[j])_{1\le i \le m}$ and $D_2 = (d_i - ( 1 - s_i[j]))_{1\le i \le m}$ (if $D_1$ or $D_2$ does not exist, we replace the corresponding table entry $T(D_r, j - 1)$, $r \in \{1, 2\}$, with $\textsf{false}$ in the formula for $T(D, j)$).
  Intuitively, 
  $D_1$ (resp.\ $D_2$) corresponds to setting the $i^{\text{th}}$ column of a solution to zero (resp.\ one).
  Since setting the $i^{\text{th}}$ column of a solution to zero (resp.\ one) will increase the Hamming distance of an input string that has a one (resp.\ a zero) in this column,
  we should update the Hamming distances accordingly.
  Finally, our input instance is a yes-instance if and only if there is a tuple~$(d_1,\ldots, d_m)\in \{0, 1, \ldots, k - 1\}^m$ with $\sum_{1\le i\le m}d^p_i\le k^p$ such that $T(d_1,\ldots, d_m,n)=\text{true}$.
 The running time and space are $O(k^m\cdot n)$ since the dynamic table has $k^m\cdot n$ entries and each entry can be computed in constant time.
\end{proof}
}

\looseness=-1
The dynamic program given in \cref{lem:dp-m} is efficient if there is a small number~$m$ of input strings only. 
In particular, if $m$ satisfies $m\le \frac{k^{\nicefrac{p}{(p+1)}}}{\log{k}}$, then we immediately obtain an $O(n\cdot 2^{k^{\nicefrac{p}{(p+1)}}})$-time algorithm.
Otherwise, we can use \cref{lem:XP-n}.
The algorithm behind \cref{lem:XP-n} is based on a different but related idea as the fixed-parameter algorithm for \CloStr\ given by \citet{GraNieRoss2003}: We use data reduction to shrink the length of the strings by~$k^p$, observe that one of the input strings must be close to a solution with bound~$k$ if it exists, and then find the solution by a search tree.
\begin{lemma}\label[lemma]{lem:XP-n}
  \pHDC{} can be solved in $O(nm^2 \cdot k^{\frac{p\cdot k}{\sqrt[p]{m}}})$~time, where $m$ and $n$ are the number and the length of the input strings, respectively, and $k$ is the p-norm distance bound.
\end{lemma}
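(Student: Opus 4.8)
The plan is to combine two standard ingredients for fixed-parameter string problems---a data reduction that bounds the number of relevant columns, and a bounded-depth search tree seeded at an input string that is close to the solution---but to drive both of them by the $\ell_p$-sum budget $k^p$ rather than by a maximum-distance bound as in \CloStr.

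First I would shrink the instance. Call a column \emph{dirty} if the input strings do not all agree on it, and \emph{clean} otherwise. On every clean column an optimal solution may be assumed to take the common value, since deviating there strictly increases the distance to every input string; I delete all clean columns and record these forced values. The crucial observation is that the number of surviving dirty columns is at most $k^p$: if $\sol$ is a solution then, because Hamming distances are integers and $p>1$, we have $\hd{}(\sol,s)\le \hd{p}(\sol,s)$ for every $s$ (and trivially so when the distance is $0$), hence $\sum_{s\in S}\hd{}(\sol,s)\le \objectivel{S}{\sol}\le k^p$; as each dirty column forces at least one input string to disagree with $\sol$, the dirty columns number at most $\sum_{s\in S}\hd{}(\sol,s)\le k^p$. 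So if more than $k^p$ dirty columns remain I reject, and otherwise I am left with $n'\le k^p$ columns. This is the ``shrink by $k^p$'' step.

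Second I would use an averaging argument to locate a good seed. Since the $m$ values $\hd{p}(\sol,s_i)$ sum to at most $k^p$, their minimum is at most $k^p/m$, so some input string $s_{i^*}$ satisfies $\hd{}(\sol,s_{i^*})\le k/\sqrt[p]{m}=:d^*$. Because $\sol$ agrees with every input string on the clean columns, $\sol$ differs from $s_{i^*}$ only on dirty columns, and in at most $d^*$ of them. Not knowing $i^*$, I try every input string as a seed (a factor of $m$); for each seed $s_i$ I run a depth-$d^*$ search tree that, at each level, selects one of the $\le n'$ dirty columns and flips the seed there, thereby enumerating every string that agrees with $s_i$ off the dirty columns and differs from it in at most $d^*$ dirty positions. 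Each generated candidate is tested against the budget $\objectivel{S}{\cdot}\le k^p$ in $O(nm)$ time, and completeness follows from the seed $s_{i^*}$ together with the fact that this enumeration reaches $\sol$.

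Finally I would assemble the running time. The search tree has at most $\sum_{i=0}^{d^*}\binom{n'}{i}\le (n'+1)^{d^*}\le (k^p+1)^{d^*}=O\big(k^{p\,d^*}\big)=O\big(k^{pk/\sqrt[p]{m}}\big)$ leaves, there are $m$ seeds, and each candidate is checked in $O(nm)$ time, giving the claimed $O(nm^2\cdot k^{pk/\sqrt[p]{m}})$ bound. The delicate points---and where this departs from the \CloStr{} algorithm of \citet{GraNieRoss2003}---are exactly the two bounds driven by the $\ell_p$-budget: that the dirty-column count is controlled by $k^p$ (rather than by the maximum distance), which rests on the integrality inequality $\hd{}\le\hd{p}$, and that the search depth is only $d^*=k/\sqrt[p]{m}$ (rather than $k$), which rests on the averaging bound for the closest string. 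Matching the exponent $p\cdot d^*$ to the stated $pk/\sqrt[p]{m}$ hinges on the column bound being precisely $k^p$, so that each of the $d^*$ branching levels contributes a factor of $k^p$; I expect getting these two budget-driven estimates to line up to be the main obstacle.
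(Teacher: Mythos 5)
Your proposal is correct and follows essentially the same route as the paper's proof: the same clean/dirty-column data reduction bounding the remaining length by $k^p$, the same averaging argument yielding a seed string within Hamming distance $k/\sqrt[p]{m}$ of the solution, and the same enumeration over all seeds and all candidates within that radius. Your explicit justification of the column bound via $\hd{}(\sol,s)\le\hd{p}(\sol,s)$ for integer distances is a welcome clarification of a step the paper leaves implicit.
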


\begin{proof}
  \looseness=-1 Let $I=(S,k)$ be an instance of \pHDC{} with $S=(s_1,\dots,s_m)$ being the input strings of length $n$ and $k$ being the p-norm distance bound.
  To show the statement, we first observe that if a column is an all-zero (resp.\ an all-one) column,
  then we can simply assume that an optimal solution will also have zero (resp.\ one) in this column as our objective function is convex. 
  By preprocessing all columns that are either an all-zero or an all-one vector, 
  we obtain an equivalent instance, where each column has at least a zero and at least a one.
  Thus, for each column, no matter which value a solution has at this column, it will always induce Hamming distance of at least one to some input string. 
  Consequently, if there are more than $k^p$ columns remaining,
  then we can simply answer ``no'' as any string will have cost more than $k$ to the input.
  Otherwise, there remain at most $k^p$ columns.

  If $I$ is a yes-instance, meaning that there is a solution~$\sol$
  for $I$ with $\|(\sol,S)\|_p \le k$, then there is an
  input string~$s^{**}\in S$ whose Hamming distance satisfies
  $\hd{}(s^{**},\sol)\le \sqrt[p]{\frac{k^p}{m}} =
  \frac{k}{\sqrt[p]{m}}$. Thus, we iterate over all input strings
  in~$S$, assuming in each iteration that the current string is the aforementioned~$s^{**}$. For each string $s_i$ that we assume to be the aforementioned~$s^{**}$, we go over all strings
  $\hat{s}$ that differ from $s_i$ by $k'$ columns with
  $k'\le \frac{k}{\sqrt[p]{m}}$. We check whether
  $\|(\hat{s},S)\|_p\le k$.
  We answer ``no'' if for each input string~$s_i \in S$, no length-$n$ string~$\hat{s}$ with $\hd{}(s_i,\hat{s})\le \frac{k}{\sqrt[p]{m}}$ exists which satisfies $\|(\hat{s},S)\|_p\le k$.

  
  It remains to show the running-time bound. Observe that the
  preprocessing for all-zero and all-one columns can be done in
  $O(nm)$ time. After that, for each of the $m$ input strings~$s_i$,
  we search all strings of Hamming distance at most
  $k' \leq \frac{k}{\sqrt[p]{m}}$ to $s_i$, and there are $O(n^{\frac{k}{\sqrt[p]{m}}})$ such strings. For each of them, we compute the
  objective function, which can be accomplished in $O(nm)$ time. As
  already reasoned, after the preprocessing, $n$ is upper-bounded by
  $k^p$. Thus, the overall running time bound is
  $O(nm + nm^2 \cdot n^{\frac{k}{\sqrt[p]{m}}}) = O(nm^2 \cdot
  k^{\frac{p\cdot k}{\sqrt[p]{m}}})$, as claimed.
\end{proof}

\noindent Combining \cref{lem:dp-m} with \cref{lem:XP-n}, we obtain a subexponential 
algorithm with respect to~$k$.

\begin{theorem}\label{thm:subexp}
  For each fixed positive value~$\varepsilon>0$, \pHDC{} can be solved in $O(nm^2 \cdot 2^{k^{\nicefrac{p}{(p+1)}+\varepsilon}})$~time, where $n$ and $m$ denote the length and the number of input strings, and $k$ is the $p$-norm distance bound with $p>1$.
\end{theorem}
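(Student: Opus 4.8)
The plan is to combine the two algorithms from \cref{lem:dp-m} and \cref{lem:XP-n} through a case distinction governed by the size of~$m$ relative to~$k$, with the threshold set at $m = k^{\nicefrac{p}{(p+1)}}/\log k$. This threshold is chosen precisely so that the dynamic program of \cref{lem:dp-m} is subexponential below it, while the search-tree algorithm of \cref{lem:XP-n} is subexponential above it.

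First I would handle the case $m \le k^{\nicefrac{p}{(p+1)}}/\log k$ by invoking \cref{lem:dp-m}, which runs in $O(n \cdot k^m)$ time. Rewriting $k^m = 2^{m\log k}$ and substituting the bound on~$m$ gives $m\log k \le k^{\nicefrac{p}{(p+1)}}$, so the running time is $O(n\cdot 2^{k^{\nicefrac{p}{(p+1)}}})$, which lies within the claimed bound since $k^{\nicefrac{p}{(p+1)}} \le k^{\nicefrac{p}{(p+1)}+\varepsilon}$ and $n \le nm^2$.

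Next I would handle the case $m > k^{\nicefrac{p}{(p+1)}}/\log k$ using \cref{lem:XP-n}, whose running time is $O(nm^2\cdot k^{\nicefrac{p k}{\sqrt[p]{m}}})$. The key computation is to bound the exponent $\frac{pk\log k}{\sqrt[p]{m}}$: since $m > k^{\nicefrac{p}{(p+1)}}/\log k$ yields $\sqrt[p]{m} > k^{\nicefrac{1}{(p+1)}}/(\log k)^{\nicefrac{1}{p}}$, the exponent is at most $p\cdot k^{\nicefrac{p}{(p+1)}}\cdot (\log k)^{\nicefrac{(p+1)}{p}}$. Because $(\log k)^{\nicefrac{(p+1)}{p}}$ grows more slowly than any positive power of~$k$, for every fixed $\varepsilon > 0$ there is a constant~$k_0$ (depending only on~$p$ and~$\varepsilon$) with $p\cdot (\log k)^{\nicefrac{(p+1)}{p}} \le k^\varepsilon$ for all $k \ge k_0$; for such~$k$ the exponent is at most $k^{\nicefrac{p}{(p+1)}+\varepsilon}$, giving running time $O(nm^2\cdot 2^{k^{\nicefrac{p}{(p+1)}+\varepsilon}})$ as desired. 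For the finitely many values $k < k_0$, the distance bound is constant, so after the all-zero/all-one column preprocessing of \cref{lem:XP-n} at most $k^p = O(1)$ columns remain and a brute-force search over the surviving columns solves the instance in polynomial time, contributing only a polynomial factor.

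The main obstacle I anticipate is the asymptotic accounting in the second case, namely verifying that the sub-polynomial logarithmic factor $(\log k)^{\nicefrac{(p+1)}{p}}$ appearing in the exponent can be absorbed into the~$k^\varepsilon$ slack. This is exactly where the freedom to pick an arbitrarily small $\varepsilon > 0$ is exploited, and it forces the separate---but trivial---treatment of the boundary regime $k < k_0$.
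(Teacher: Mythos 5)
Your proposal is correct and follows essentially the same route as the paper: the same threshold $m = k^{\nicefrac{p}{(p+1)}}/\log k$, the same invocation of \cref{lem:dp-m} below the threshold and \cref{lem:XP-n} above it, and the same absorption of the $(\log k)^{\nicefrac{(p+1)}{p}}$ factor into the $k^{\varepsilon}$ slack for $k \geq k_0$. Your handling of the boundary regime $k < k_0$ via the column preprocessing is a slightly more explicit version of the paper's remark that \cref{lem:XP-n} then runs in $O(nm^2)$ time, but the argument is the same.
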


\begin{proof}
  Let $I = (S,k)$ be an instance of \pHDC{} with $S=(s_1,\dots,s_m)$ being the input strings of length $n$ and $k$ being the p-norm distance bound.
  As already discussed, to solve our problem we distinguish between two cases, depending on whether $m \le \frac{k^{\nicefrac{p}{(p+1)}}}{\log{k}}$ holds.

  If $m \le \frac{k^{\nicefrac{p}{(p+1)}}}{\log{k}}$, then
  $k^m \le k^{\frac{k^{\nicefrac{p}{(p+1)}}}{\log{k}}}\le
  2^{k^{\nicefrac{p}{(p+1)}}}$. In this case, we use the dynamic
  programming approach given in the proof of \cref{lem:dp-m}, which
  has the desired running
  time~$O(n\cdot k^m) = O(n\cdot 2^{k^{\nicefrac{p}{(p+1)}}})$.
  
  Otherwise, $m > \frac{k^{\nicefrac{p}{(p+1)}}}{\log{k}}$, meaning that
$\frac{p \cdot k \cdot \log{k}}{\sqrt[p]{m}}  < p \cdot k \cdot \log{k}/\sqrt[p]{\frac{k^{\nicefrac{p}{(p+1)}}}{\log{k}}}=p\cdot k^{\nicefrac{p}{(p+1)}}\cdot (\log{k})^{\nicefrac{(p+1)}{p}}$.
  \noindent For each fixed positive~$\varepsilon \in \mathds{R}$ there exists
  $k_0 = k_0(p, \varepsilon) \in \mathds{R}$ such that, for each
  $k \geq k_0$, we have
  $p \cdot (\log{k})^{\nicefrac{(p+1)}{p}} < k^{\varepsilon}$. If $k < k_0$,
  then the algorithm in the proof of~\cref{lem:XP-n} runs in $O(nm^2)$ time.
  Otherwise $k \geq k_0$, which implies
  $\frac{p\cdot k \cdot \log{k}}{\sqrt[p]{m}} <
  k^{\nicefrac{p}{(p+1)}+\varepsilon}$. Thus, the
  algorithm given in the proof of \cref{lem:XP-n} has a running
  time of
  $O(nm^2\cdot k^{\frac{p\cdot k}{\sqrt[p]{m}}}) = O(nm^2\cdot
  2^{\frac{p\cdot k \cdot \log{k}}{\sqrt[p]{m}}}) = O(nm^2\cdot
  2^{k^{\nicefrac{p}{(p+1)}+\varepsilon}})$.

  Altogether we presented an algorithm which has the desired running time bound.
\end{proof}


\subsection{A fixed-parameter algorithm for the number of input strings}\label{sec:FPT-m}

In this section, we show that minimizing the sum of the $p$-distances
is fixed-parameter tractable for the number~$m$ of input strings. The
idea is to formulate our problem as a combinatorial $n$-fold integer
program~(C$n$IP) with $O(2^m)$ variables and $O(m)$ constraints. We
then apply the following simplified result of
\citet{KnoKouMni2017,knop_combinatorial_2017}:

\begin{proposition}[{\cite[Theorem~3]{knop_combinatorial_2017}}]\label[proposition]{thm:nfold}
  Let $E \in \mathds{Z}^{(r + 1) \times t}$ be a matrix such that the
  last row equals~$(1, 1, \ldots, 1) \in \mathds{Z}^t$. Let
  $b \in \mathds{Z}^{r + 1}$, $\ell, u \in \mathds{Z}^{t}$, and let
  $f \colon \mathds{R}^t \to \mathds{R}$ be a separable convex
  function\footnote{A function is separable convex if it is the sum of
    univariate convex functions.}. Then, there is an algorithm that
  solves\footnote{The algorithm correctly reports either a minimizer
    $x \in P$ or that $P$ is infeasible or unbounded.}
  $P := \min\{f(x) \mid Ex = b \wedge \ell \leq x \leq u \wedge x \in
  \mathds{Z}^t\}$ in
  $t^{O(r)}\!\cdot\! \big((1+\|E\|_{\infty})\!\cdot\!
  r\big)^{O(r^2)}\!\cdot\! L + T$ time, where
$L$ is the total bit-length of
$b, \ell, u$, and $f$, and $T$ is the time 
that an algorithm needs to solve the continuous relaxation of~$P$.
\end{proposition}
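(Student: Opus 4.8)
The plan is to prove this through the \emph{Graver-basis augmentation} framework that underlies combinatorial $n$-fold integer programming, of which the matrix~$E$ — with its distinguished all-ones last row — is exactly the template. Recall that the \emph{Graver basis}~$\mathcal{G}(E)$ of~$E$ is the set of conformally (sign-)minimal nonzero elements of $\ker(E)\cap\mathds{Z}^t$. The backbone of the argument is the classical equivalence: for a separable convex~$f$, a feasible point~$x$ is a global minimizer of~$P$ if and only if it admits no \emph{improving Graver step}, i.e.\ no $g\in\mathcal{G}(E)$ and $\lambda\in\mathds{Z}_{>0}$ with $\ell\le x+\lambda g\le u$ and $f(x+\lambda g)<f(x)$. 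Hence it suffices to (i)~compute an initial feasible point or report infeasibility, and (ii)~repeatedly take improving steps until none remains.

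The first and main technical step is to bound $\|\mathcal{G}(E)\|_\infty$. Here the special shape of~$E$ is decisive: since~$E$ has only $r+1$ rows and one of them is the all-ones (summation) row, a Steinitz-type / Eisenbrand--Weismantel argument bounds the $\ell_1$-norm of the kernel elements coming from the $r$ nontrivial rows by a function of $r$ and $\|E\|_\infty$ alone, and the summation row then controls the remaining coordinates. This yields a bound on $\|\mathcal{G}(E)\|_\infty$ of the form $((1+\|E\|_\infty)\,r)^{O(r)}$, independent of~$t$. I expect this to be the crux of the whole proof: it is precisely the reason combinatorial $n$-fold programs are tractable, and the final $((1+\|E\|_\infty)\,r)^{O(r^2)}$ dependence of the running time is born here.

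Given the Graver-norm bound, I would implement the augmentation in two parts. For the per-iteration \emph{best-step oracle}, I would design a dynamic program over the $r$ nontrivial rows: because every relevant step has bounded entries, the number of attainable value-vectors of those rows is $((1+\|E\|_\infty)\,r)^{O(r)}$, and assembling an optimal improving direction over the $t$ coordinates costs $t^{O(r)}\cdot((1+\|E\|_\infty)\,r)^{O(r^2)}$. To bound the \emph{number} of iterations, I would use the standard halving argument for separable convex minimization: anchoring on the optimum of the continuous relaxation (computed in time~$T$) via a proximity bound, Graver-best steps shrink the optimality gap geometrically, so $\poly(t,L)$ iterations suffice, which is absorbed into the $\cdot L$ factor of the claimed bound.

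It remains to produce the initial feasible solution (or certify infeasibility/unboundedness), which I would obtain by running the same augmentation machinery on an auxiliary program whose objective measures constraint violation — a routine two-phase reduction that stays within the same time bound. Combining the norm bound, the oracle cost, the iteration count, and the start-up phase gives the stated running time. Since the statement is verbatim Theorem~3 of \citet{knop_combinatorial_2017}, the proof in the paper should simply invoke it; the sketch above records the pieces one would reassemble if a self-contained argument were wanted.
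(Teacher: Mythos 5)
The paper gives no proof of this proposition at all---it is imported verbatim as Theorem~3 of \citet{knop_combinatorial_2017}, which is exactly what you conclude at the end of your write-up. Your sketch of the Graver-basis augmentation machinery (norm bound on $\mathcal{G}(E)$ exploiting the all-ones row, a dynamic-programming best-step oracle, geometric convergence anchored at the continuous relaxation, and a two-phase start-up) is a faithful outline of how that cited result is actually established, so there is nothing to correct here.
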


To get a useful running time bound from \cref{thm:nfold}, we need a
bounded number of variables. To do this, we group columns in the input
strings with the same ``type'' together and introduce an integer
variable for each column type. To this end, given a
set~$S=\{s_1,\ldots,s_m\}$ of length\nobreakdash-$n$ strings, we say
that two columns~$j, j'\in [n]$ have the \myemph{same type} if for each
$i\in [m]$ it holds that $s_i[j]=s_i[j']$. The \myemph{type} of column~$j$ is its equivalence class in the same-type relation. Thus, each
type is represented by a vector in $\{0, 1\}^m$. Let $n'$ denote the
number of different (column) types in~$S$. Then,
$n' \le \min(2^m, n)$. Enumerate the $n'$ column types as
$t_1, \ldots, t_{n'}$. Below we identify a column type with its index
for easier notation. Using this, we can encode the set~$S$ succinctly
by introducing a constant~$e(j)$ for each column type~$j \in [n']$
that denotes the number of columns with type~$j$.

Analogously, given a solution string~$\sol$, we can also encode this
string~$\sol$ via an integer vector~$x\in \{0,1,\ldots,n\}^{n'}$,
where for each type~$j\in [n']$ we define $x[j]$ as the number of ones
in the solution~$\sol$ whose corresponding columns are of type~$j$.
Note that this encodes all essential information in a solution, since
the actual order of the columns is not important (see \cref{ex:types}). Vice versa, each
integer vector in~$x \in \{0,1,\ldots,n\}^{n'}$ satisfying
$0 \leq x[j] \leq e(j)$ for each $j \in [n']$ yields a length-$n$
binary string~$\sol(x)$; it remains to add constraints and a suitable
objective function to ensure that $\sol(x)$ has minimum sum of
$p$-distances to the input strings.

\begin{example}\label{ex:types}
For an illustration, let $S=\{0000,0001,1110\}$.
The set~$S$ has two different column types, represented by $(0, 0, 1)^T$
, call it type~$1$, and $(0, 1, 0)^T$
, call it type~$2$.
There are three columns of type~$1$ and one column of type~$2$.
The solution~$0110$ for $S$ can be encoded by two variables~$x[1]=2$ and $x[2]=0$.
\end{example}

We next introduce $m$ variables~$y \in \{0, 1, \ldots, n\}^{m}$ that
shall be equal to the Hamming distances of each input string~$s_i$,
$i \in [m]$, to the solution~$\sol(x)$ selected by~$x$. To achieve
this, we need a formula specifying the Hamming distance between the
two strings~$s_i$ and $\sol(x)$, and this formula needs to be linear
in~$x$. This can be achieved as follows; for the sake of simplicity,
we let $s_i[j]=1$ if the column of type~$j$ has one in the
$i^{\text{th}}$ row and $s_i[j]=0$ if it has zero in the
$i^{\text{th}}$ row:
$  \hd{}(s_i,\sol(x))  =\! \sum_{j=1}^{n'}\big(s_i[j] \!\cdot\! (e(j)\!-\!x[j])+(1\!-\!s_i[j])\cdot x[j] \big) =\! \sum_{j=1}^{n'}\left( e(j)\cdot s_i[j] + (1\!-\!2s_i[j])\!\cdot\! x[j] \right) =\! w_i+\sum_{j=1}^{n'}x[j]\cdot (1-2s_i[j]),$
where we define $w_i\coloneqq \sum_{j=1}^{n'}e(j)\cdot s_i[j]$, which denotes the number of ones in string~$s_i$. 



We can now formulate an appropriate C$n$IP. The variables are
$x \in \mathds{R}^{n'}$, $y \in \mathds{R}^m$, and a dummy
variable~$z \in \mathds{Z}$. The bounds~$\ell, u$ for the variables
are defined such that 
\begin{inparaenum}[(1)]\item for each $j \in [n']$ it holds that $0 \leq x[j] \leq e(j)$,
  \item for each $i \in [m]$ it holds that $0 \leq y[i] \leq n$,  and
  \item there is virtually no
constraint on~$z$, that is,
$-n' \cdot n + mn \leq z \leq n' \cdot n + mn$.
\end{inparaenum} The objective function
is defined as $f(x, y, z) = \sum_{i = 1}^{n'}y[i]^p$ which is clearly
separable convex over the domain specified by $\ell$ and~$u$. Finally,
the constraint system~$Et = b$, where $t^\top = (x^\top y^\top z)$ is defined such that the first $m$
constraints are $\sum_{j = 1}^{n'}\big(x[j] \cdot (1-2s_i[j])\big) - y[i] = -w_i$,
for each $i \in [m]$, and the last constraint is
$\sum_{j = 1}^{n'}x[j] + \sum_{i = 1}^m y[i] + z = 0$ (note that this
constraint can always be fulfilled by setting~$z$ accordingly).

By the above reasoning, an instance of \pHDC\ is a yes-instance if and only
if $\min\{f(x) \mid Et = b \wedge \ell \leq t \leq u \wedge t \in
\mathds{Z}^{n' + n + 1}\}$ is at most~$k^p$. Plugging in the running
time of \cref{thm:nfold}, and using a polynomial-time algorithm for
the continuous relaxation of the C$n$IP
above~\cite{chubanov_polynomialtime_2016}, we obtain the following.
\begin{theorem}\label{thm:fpt-m}
 \pHDClong\ can be solved in $m^{O(m^2)} \cdot (n\cdot m)^{O(1)}$ time.
\end{theorem}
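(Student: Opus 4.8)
The plan is to invoke \cref{thm:nfold} on the combinatorial $n$-fold integer program constructed above, so that the only remaining work is to check that the formulation has the required shape and to bound the parameters that enter the stated running time.

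First I would confirm correctness of the reformulation. By the preceding discussion, every integral feasible point $t^\top = (x^\top\, y^\top\, z)$ has $x$ encoding a binary string~$\sol(x)$, and the first $m$ constraints force $y[i] = \hd{}(s_i, \sol(x))$ for each $i \in [m]$; hence the minimum of the objective $f = \sum_{i=1}^{m} y[i]^p$ over the feasible region equals $\min_{\sol}\sum_{s \in S}\hd{p}(\sol, s)$, so the instance is a yes-instance exactly when this minimum is at most~$k^p$. I would also record the two structural hypotheses of the proposition: $f$ is separable convex on the box specified by $\ell, u$ because $t \mapsto t^p$ is convex for $p > 1$, and the last constraint $\sum_j x[j] + \sum_i y[i] + z = 0$ makes the last row of~$E$ the all-ones vector.

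Next I would read off the parameters. The number of variables is $t = n' + m + 1 = O(2^m)$, using $n' \le 2^m$; the number of constraints is $m + 1$, so $r = m$. Every nonzero entry of~$E$ lies in $\{-1, 1\}$, whence $\|E\|_\infty = 1$. The entries of $b, \ell, u$ are bounded in absolute value by a polynomial in $n$ and $2^m$, and $p$ is a fixed rational, so the total bit-length is $L = (n + m)^{O(1)}$. For the time~$T$ to solve the continuous relaxation I would invoke the cited polynomial-time algorithm~\cite{chubanov_polynomialtime_2016}, giving $T = (nm)^{O(1)}$. Substituting into the running time $t^{O(r)} \cdot \big((1 + \|E\|_\infty) \cdot r\big)^{O(r^2)} \cdot L + T$ yields $(2^m)^{O(m)} \cdot (2m)^{O(m^2)} \cdot (nm)^{O(1)} + (nm)^{O(1)} = 2^{O(m^2)} \cdot m^{O(m^2)} \cdot (nm)^{O(1)}$, which collapses to $m^{O(m^2)} \cdot (nm)^{O(1)}$ since $2^{O(m^2)} \le m^{O(m^2)}$ for $m \ge 2$.

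The one ingredient that is not purely combinatorial bookkeeping, and hence the point I would treat most carefully, is the continuous relaxation: its optimum can be irrational for general rational $p$, so I would make sure that the cited algorithm (and the requirements of \cref{thm:nfold}) only need a sufficiently accurate feasible point rather than an exact optimum, and that producing it stays within $(nm)^{O(1)}$ time. Once $T$ is pinned down as polynomial, the bounds on $t$, $r$, $\|E\|_\infty$, and $L$ are routine and the running time follows immediately.
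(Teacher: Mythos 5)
Your proposal is correct and follows essentially the same route as the paper: it applies \cref{thm:nfold} to the combinatorial $n$-fold integer program already constructed, reads off $t = n' + m + 1 = O(2^m)$, $r = m$, $\|E\|_\infty = 1$, and polynomial $L$ and $T$ (the latter via the cited algorithm of Chubanov), and collapses $(2^m)^{O(m)}\cdot(2m)^{O(m^2)}$ to $m^{O(m^2)}$. In fact you spell out the parameter bookkeeping more explicitly than the paper does, and your caveat about the continuous relaxation is a reasonable refinement of the same citation the paper relies on.
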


\subsection{A factor-\boldmath$2$ approximation}\label{sec:2approx}
\appendixsection{sec:2approx}
It is known that by taking an input string that minimizes the largest Hamming distance over all input strings, \textsc{Closest String} can be approximated within factor~$2$.
Indeed, 
using a similar idea, we show that the minimization version of our \pHDC{} problem can also be approximated within factor $2$.
More specifically, we show that an input string which has minimum $p$-norm to all other input strings is a $2$-approximate solution.
    %

\toappendix{
  \noindent Our approximation is based on the following observation.
  \newcommand{\factorlemma}{For each two non-negative integers~$x$ and
  $y$, and for each rational value $p>1$, it holds that
  $(x+y)^{p} \le 2^{p-1}(x^{p}+y^{p})$.}

\begin{lemma}
  \label[lemma]{lem:factor-2}
  \factorlemma%
\end{lemma}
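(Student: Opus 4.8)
The plan is to derive this inequality directly from the convexity of the map $t \mapsto t^p$, exactly as was done in the proof of \cref{lem:complement}. I would define the convex function $f \colon \mathds{R}^+ \cup \{0\} \to \mathds{R}$ with $f(t) = t^p$; since $p > 1$, $f$ is indeed convex. Note that the integrality of $x$ and $y$ plays no role in the argument, so I would simply treat them as arbitrary non-negative reals.

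The key step is a single application of Jensen's inequality~\jensencite\ to the two points $x$ and $y$ with equal weights $\nicefrac{1}{2}$:
\begin{align*}
  \Big(\frac{x+y}{2}\Big)^p = f\Big(\frac{x+y}{2}\Big) \le \frac{f(x) + f(y)}{2} = \frac{x^p + y^p}{2}.
\end{align*}
Multiplying both sides by $2^p$ then yields $(x+y)^p \le 2^{p-1}(x^p+y^p)$, as desired.

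There is essentially no obstacle here: the statement is a textbook consequence of the convexity of the power function, and it reuses the very same tool (Jensen's inequality) already invoked in the excerpt. If one preferred an entirely self-contained argument avoiding Jensen, one could instead assume without loss of generality that $x + y > 0$ (the case $x = y = 0$ being trivial), divide the claimed inequality through by $(x+y)^p$, substitute $t = x/(x+y) \in [0,1]$, and show that $g(t) \coloneqq t^p + (1-t)^p$ attains its sole minimum $2^{1-p}$ at $t = \nicefrac{1}{2}$ by checking that $g'(t) = 0$ there and that $g''(t) > 0$ throughout (both facts using $p > 1$); rearranging the resulting bound $t^p + (1-t)^p \ge 2^{1-p}$ gives the claim. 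The convexity route above is shorter, and it is the one I would present.
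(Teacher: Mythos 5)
Your proposal is correct and follows essentially the same route as the paper's own proof: defining the convex function $f(t)=t^p$ and applying Jensen's inequality with equal weights to $x$ and $y$, then multiplying through by $2^p$. The alternative self-contained argument you sketch is a fine backup, but the Jensen route you present is exactly what the paper does.
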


\begin{proof}
  Define $f \colon \mathds{R}^+ \cup \{0\} \to \mathds{R}$ as
  $f(x) = x^p$. Recall that $p > 1$ and thus $f$ is convex. By
  Jensen's inequality~\jensencite\ we thus have
  \[f\left(\frac{x + y}{2}\right) \leq \frac{f(x) + f(y)}{2}.\]
  It follows that
  \[\frac{(x + y)^p}{2^p} \leq \frac{x^p + y^p}{2},\]
  and thus $(x + y)^p \leq 2^{p - 1}(x^p + y^p)$.
\end{proof}
}

\newcommand{\factortwo}{%
  The minimization variant of \pHDC{} can be approximated within factor $2$ in polynomial time.%
}
\begin{proposition}[\appsymb]\label{prop:factor-2}
  \factortwo
\end{proposition}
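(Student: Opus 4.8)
The plan is to show that the algorithm which outputs the input string $s^{**} \in S$ that minimizes $\|(s_i, S)\|_p$ over all $s_i \in S$ is already a $2$-approximation; this string is found in polynomial time by simply evaluating the objective on each of the $m$ input strings and keeping the best one. Let $\sol$ be an optimal solution and set $\mathrm{OPT} := \|(\sol, S)\|_p$, writing $d_i := \hd{}(\sol, s_i)$ so that $\mathrm{OPT}^p = \sum_{i=1}^m d_i^p$. The core of the argument is to bound the objective value of an arbitrary input string~$s_j$ against $\mathrm{OPT}$ by combining the triangle inequality for Hamming distances with \cref{lem:factor-2}.

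Concretely, for a fixed $j \in [m]$ the triangle inequality gives $\hd{}(s_j, s_i) \le d_j + d_i$ for every~$i$, and since $x \mapsto x^p$ is increasing on the non-negative reals, raising to the $p$-th power and then applying \cref{lem:factor-2} (with $x = d_j$ and $y = d_i$) yields
\begin{align*}
  \|(s_j, S)\|_p^p = \sum_{i=1}^m \hd{p}(s_j, s_i) \le \sum_{i=1}^m (d_j + d_i)^p \le 2^{p-1}\sum_{i=1}^m \big(d_j^p + d_i^p\big) = 2^{p-1}\big(m\, d_j^p + \mathrm{OPT}^p\big).
\end{align*}

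Next I would exploit the minimality of $s^{**}$ through an averaging step: since $\|(s^{**}, S)\|_p \le \|(s_j, S)\|_p$ for every $j$, we have $\|(s^{**}, S)\|_p^p \le \tfrac{1}{m}\sum_{j=1}^m \|(s_j, S)\|_p^p$. Summing the per-string bound above over all $j$ makes the two contributions coincide, because $\sum_{j=1}^m m\, d_j^p = m\cdot \mathrm{OPT}^p$ and $\sum_{j=1}^m \mathrm{OPT}^p = m\cdot \mathrm{OPT}^p$, giving
\begin{align*}
  \|(s^{**}, S)\|_p^p \le \frac{2^{p-1}}{m}\big(m\cdot \mathrm{OPT}^p + m\cdot \mathrm{OPT}^p\big) = 2^p\cdot \mathrm{OPT}^p,
\end{align*}
so that $\|(s^{**}, S)\|_p \le 2\cdot \mathrm{OPT}$, as required.

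The triangle-inequality estimate and the final arithmetic are routine. The one genuinely load-bearing idea, which I would emphasize, is the averaging over all input strings: an \emph{arbitrary} input string need not be a $2$-approximation, since if $s_j$ happens to lie far from~$\sol$ the term $m\, d_j^p$ dominates the bound. The argument crucially uses that $s^{**}$ attains at most the \emph{average} objective value over the input strings, at which point the troublesome term $m\, d_j^p$ averages out to exactly $\mathrm{OPT}^p$ and precisely matches the contribution of the optimum, producing the clean factor $2^p$.
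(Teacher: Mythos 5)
Your proposal is correct and follows essentially the same route as the paper: pick the input string minimizing the objective, combine the triangle inequality with the bound $(x+y)^p \le 2^{p-1}(x^p+y^p)$ from \cref{lem:factor-2}, and use an averaging argument over the $m$ input strings to cancel the $m\,d_j^p$ term. The only cosmetic difference is that the paper first selects, by pigeonhole, a single input string $\hat{s}$ with $\hd{p}(\hat{s},\sol) \le \mathsf{OPT}^p/m$ and bounds against that one string, whereas you average the per-string bound over all $j$ at the end---the two computations are interchangeable and yield the identical factor $2^p$ on $\mathsf{OPT}^p$.
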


\appendixproofwithstatement{prop:factor-2}{\factortwo}{
  \begin{proof} 
Let $S$ be a sequence of $m$~input strings 
and let $s_1 \in S$ be some input string that minimizes the $p$-distance to the input strings: $s_1\coloneqq \argmin_{i}\sum_{s \in S}\hd{p}(s_i, s)$.
We show that $s_1$ is a factor-$2$ approximate solution,
i.e.,
$\|(s_1,S)\|_p\le 2\mathsf{OPT}$, where $\mathsf{OPT}$ is the $p$-norm of an optimal solution for $S$.
To this end, let $\sol$ be an optimal solution for $S$ and let
$\mathsf{OPT} = \objectiveroot$. Since $S$ has $m$ input strings, it
has at least one string, denoted as $\hat{s}$, whose $p$-distance to
$\sol$ is at most the arithmetic mean of $\mathsf{OPT}^{p}$:
$\hd{p}(\hat{s}, \sol) \le \frac{1}{m} \sum_{s \in S}^{m}\hd{p}(\sol,
s) = \frac{\mathsf{OPT}^{p}}{m}$. This will be important in
calculating the relation between the $p$-distance of $s_1$ to
$\mathsf{OPT}$ below. Recall that we have selected string~$s_1$ with
minimum sum of $p$-distances. Thus, the following holds:
\begin{align}
  \objectivel{S}{s_1} \le \objectivel{S}{\hat{s}} \le \sum_{s\in S}(\hd{}(\hat{s}, \sol)+\hd{}(s,\sol))^{p}. \label{eq:factor-2}
\end{align}
The last inequality holds because $p>1$ and the Hamming distances fulfill the triangle inequality.

\noindent 
To obtain our desired approximation factor, it suffices to show that $\|(s_1,S)\|^p_p \le (2
\cdot \mathsf{OPT})^p$.
To achieve this, by \eqref{eq:factor-2} and \cref{lem:factor-2}, we derive that
\begin{align*}
  \|(s_1,S)\|^p_p 
  &\stackrel{\eqref{eq:factor-2}}{\le} \sum_{s\in S}(\hd{}(\hat{s}, \sol)+\hd{}(s,\sol))^{p} \stackrel{{\scriptsize \text{Lemma}}~\ref{lem:factor-2}}{\le} \sum_{s\in S}{2^{p-1}}(\hd{p}(\hat{s}, \sol)+\hd{p}(s,\sol))\\
  & = 2^{p-1}(m\cdot \hd{p}(\hat{s},\sol) + \sum_{s\in S} \hd{p}(s,\sol)) \le
    2^{p-1}(m\cdot \frac{\mathsf{OPT}^p}{m} + \mathsf{OPT}^p) = (2 \cdot \mathsf{OPT})^p.
\end{align*}
Note that the second but last inequality holds since $\hat{s}$ was the string that has $p$-distance at most $\frac{\mathsf{OPT}^p}{m}$ to the solution~$\sol$.
\end{proof}
}

\section{Conclusion and Outlook}
We analyzed the complexity of \pHDClong\ for all fixed rational values~$p$ between $p=1$ and $p=\infty$. We believe that the running time bounds established in this paper, of essentially $2^{\Theta(k^{\frac{p}{p + 1}})}\cdot (nm)^{O(1)}$, connect the extreme points $p = 1$ and $p = \infty$ in a very satisfying way. We did not consider the non-norm case of $0 < p < 1$, as it does not fit our clustering motivation very well. But this non-convex case might be of independent interest, and may be the subject of future work.

An interesting generalization of \CloStr{} is \textsc{Closest Substring} in which we seek a string~$\sol$ of a certain specified 
length such that each of the input strings has a substring which is close to~$\sol$ (see, e.g., \citet{MaSun2009}). It would be interesting to see how our results carry over to this and other similar variants. Finally, the fact that the simple 2-factor approximation for \CloStr{} carries over to \pHDC\ may imply that there are similar connections for approximation algorithms. This warrants further investigation into whether \pHDC\ admits a PTAS.

\newpage


\begin{thebibliography}{40}
\providecommand{\natexlab}[1]{#1}
\providecommand{\url}[1]{\texttt{#1}}
\expandafter\ifx\csname urlstyle\endcsname\relax
  \providecommand{\doi}[1]{doi: #1}\else
  \providecommand{\doi}{doi: \begingroup \urlstyle{rm}\Url}\fi

\bibitem[Amanatidis et~al.(2015)Amanatidis, Barrot, Lang, Markakis, and
  Ries]{AmaBarLanMarRie2015}
G.~Amanatidis, N.~Barrot, J.~Lang, E.~Markakis, and B.~Ries.
\newblock Multiple referenda and multiwinner elections using hamming distances:
  Complexity and manipulability.
\newblock In \emph{Proceedings of the 14th International Conference on
  Autonomous Agents and Multiagent Systems (AAMAS~'15)}, pages 715--723, 2015.

\bibitem[Azar et~al.(2004)Azar, Epstein, Richter, and Woeginger]{AzEpRiWo2004}
Y.~Azar, L.~Epstein, Y.~Richter, and G.~J. Woeginger.
\newblock All-norm approximation algorithms.
\newblock \emph{Journal of Algorithms}, 52\penalty0 (2):\penalty0 120--133,
  2004.

\bibitem[Beliakov et~al.(2016)Beliakov, Sola, and Calvo]{BelSolCal2016}
G.~Beliakov, H.~B. Sola, and T.~Calvo.
\newblock \emph{A Practical Guide to Averaging Functions}, volume 329 of
  \emph{Studies in Fuzziness and Soft Computing}.
\newblock Springer, 2016.

\bibitem[Bradley et~al.(1996)Bradley, Mangasarian, and Street]{BraManStr1996}
P.~S. Bradley, O.~L. Mangasarian, and W.~N. Street.
\newblock Clustering via concave minimization.
\newblock In \emph{Proceedings of Advances in Neural Information Processing
  Systems 9 (NIPS~1996)}, pages 368--374, 1996.

\bibitem[Brams et~al.(2007)Brams, Kilgour, and Sanver]{brams_minimax_2007}
S.~J. Brams, D.~M. Kilgour, and M.~R. Sanver.
\newblock A minimax procedure for negotiating multilateral treaties.
\newblock In R.~Avenhaus and I.~W. Zartman, editors, \emph{Diplomacy Games:
  Formal Models and International Negotiations}, pages 265--282. Springer,
  2007.

\bibitem[Brandeau and Chiu(1988)]{brandeau_parametric_1988}
M.~L. Brandeau and S.~S. Chiu.
\newblock Parametric {{Facility Location}} on a {{Tree Network}} with an
  {{{\emph{L}}}}{\textsubscript{{\emph{p}}}}-{{Norm Cost Function}}.
\newblock \emph{Transportation Science}, 22\penalty0 (1):\penalty0 59--69,
  1988.

\bibitem[Chen et~al.(2018)Chen, Hermelin, and Sorge]{CheHerSor2018mirkin}
J.~Chen, D.~Hermelin, and M.~Sorge.
\newblock A {{Note}} on {{Clustering Aggregation}}.
\newblock Technical report, arXiv:1807.08949, 2018.

\bibitem[Chen et~al.(2012)Chen, Ma, and Wang]{chen_three-string_2012}
Z.-Z. Chen, B.~Ma, and L.~Wang.
\newblock A three-string approach to the closest string problem.
\newblock \emph{Journal of Computer and System Sciences}, 78\penalty0
  (1):\penalty0 164--178, 2012.

\bibitem[Chubanov(2016)]{chubanov_polynomialtime_2016}
S.~Chubanov.
\newblock A {{Polynomial}}-{{Time Descent Method}} for {{Separable Convex
  Optimization Problems}} with {{Linear Constraints}}.
\newblock \emph{SIAM Journal on Optimization}, 26\penalty0 (1):\penalty0
  856--889, 2016.

\bibitem[Cohen et~al.(1997)Cohen, Honkala, Litsyn, and Lobstein]{CoHoLiLo1997}
G.~Cohen, I.~Honkala, S.~Litsyn, and A.~Lobstein.
\newblock \emph{Covering Codes}, volume~54.
\newblock North-Holland, 1997.

\bibitem[Cygan et~al.(2015)Cygan, Fomin, Kowalik, Lokshtanov, Marx, Pilipczuk,
  Pilipczuk, and Saurabh]{CyFoKoLoMaPiPiSa2015}
M.~Cygan, F.~V. Fomin, L.~Kowalik, D.~Lokshtanov, D.~Marx, M.~Pilipczuk,
  M.~Pilipczuk, and S.~Saurabh.
\newblock \emph{Parameterized Algorithms}.
\newblock Springer, 2015.

\bibitem[Cygan et~al.(2016)Cygan, Lokshtanov, Pilipczuk, Pilipczuk, and
  Saurabh]{CygLokPilPilSau2016}
M.~Cygan, D.~Lokshtanov, M.~Pilipczuk, M.~Pilipczuk, and S.~Saurabh.
\newblock Lower {{Bounds}} for {{Approximation Schemes}} for {{Closest
  String}}.
\newblock In \emph{Proceedings of the 15th {{Scandinavian Symposium}} and
  {{Workshops}} on {{Algorithm Theory}} ({{SWAT}} 2016)}, volume~53 of
  \emph{LIPICS}, pages 12:1--12:10. {Schloss Dagstuhl--Leibniz-Zentrum fuer
  Informatik}, 2016.

\bibitem[D{\"{o}}rnfelder et~al.(2014)D{\"{o}}rnfelder, Guo, Komusiewicz, and
  Weller]{DoGuKoWe2014}
M.~D{\"{o}}rnfelder, J.~Guo, C.~Komusiewicz, and M.~Weller.
\newblock On the parameterized complexity of consensus clustering.
\newblock \emph{Theoretical Computer Science}, 542:\penalty0 71--82, 2014.

\bibitem[Faliszewski et~al.(2017)Faliszewski, Skowron, Slinko, and
  Talmon]{faliszewski_multiwinner_2017}
P.~Faliszewski, P.~Skowron, A.~Slinko, and N.~Talmon.
\newblock Multiwinner {{Rules}} on {{Paths}} from {{K}}-{B}orda to
  {{Chamberlin}}-{{Courant}}.
\newblock In \emph{Proceedings of the 26th {{International Joint Conference}}
  on {{Artificial Intelligence}} (IJCAI~2017)}, pages 192--198. {AAAI Press},
  2017.

\bibitem[Faliszewski et~al.(2019)Faliszewski, Skowron, Slinko, and
  Talmon]{FalSkoSliTal2019}
P.~Faliszewski, P.~Skowron, A.~Slinko, and N.~Talmon.
\newblock Committee scoring rules: {A}xiomatic characterization and hierarchy.
\newblock \emph{ACM Transactions on Economics and Computation}, 7\penalty0
  (1):\penalty0 3:1--3:39, 2019.

\bibitem[Frances and Litman(1997)]{FraLit1997}
M.~Frances and A.~Litman.
\newblock On covering problems of codes.
\newblock \emph{Theory of Computing Systems}, 30:\penalty0 113--119, 1997.

\bibitem[Garey and Johnson(1979)]{GJ79}
M.~R. Garey and D.~S. Johnson.
\newblock \emph{Computers and Intractability---{A} Guide to the Theory of
  {NP}-Completeness}.
\newblock W. H. Freeman and Company, 1979.

\bibitem[Ghiglia and Romero(1996)]{GhiRom1996}
D.~C. Ghiglia and L.~A. Romero.
\newblock Minimum $l_p$-norm two-dimensional phase unwrapping.
\newblock \emph{Journal of the Optical Society of America A}, 13\penalty0
  (10):\penalty0 1999--2013, 1996.

\bibitem[Gonin and Money(1989)]{Gonin1989}
R.~Gonin and A.~H. Money.
\newblock \emph{Nonlinear $L_p$-norm Estimation}.
\newblock Marcel Dekker, Inc., 1989.

\bibitem[Gradshteyn and Ryzhik(2000)]{GraRyz2000}
I.~S. Gradshteyn and I.~M. Ryzhik.
\newblock \emph{Table of Integrals, Series, and Products}.
\newblock Elsevier, 2000.

\bibitem[Gramm et~al.(2003)Gramm, Niedermeier, and Rossmanith]{GraNieRoss2003}
J.~Gramm, R.~Niedermeier, and P.~Rossmanith.
\newblock Fixed-parameter algorithms for {C}losest {S}tring and related
  problems.
\newblock \emph{Algorithmica}, 37\penalty0 (1):\penalty0 25--42, 2003.

\bibitem[Hamming(1950)]{Hamming1950}
R.~W. Hamming.
\newblock Error detecting and error correcting codes.
\newblock \emph{The Bell System Technical Journal}, 29\penalty0 (2), 1950.

\bibitem[Jain and Dubes(1988)]{JaiDub1988}
A.~K. Jain and R.~C. Dubes.
\newblock \emph{Algorithms for Clustering Data}.
\newblock Prentice-Hall, 1988.

\bibitem[Jensen(1906)]{Jensen1906}
J.~Jensen.
\newblock Sur les fonctions convexes et les in{\'e}gali{\'e}s entre les valeurs
  moyennes.
\newblock \emph{Acta Mathematica}, 30:\penalty0 175--193, 1906.

\bibitem[Kilgour(2010)]{Kilgour2010}
D.~M. Kilgour.
\newblock Approval balloting for multi-winner elections.
\newblock In J.-F. Laslier and M.~Sanver, editors, \emph{Handbook on Approval
  Voting, Studies in Choice and Welfare}, chapter~6, pages 105--124. Springer,
  2010.

\bibitem[Kloft et~al.(2009)Kloft, Brefeld, Sonnenburg, Laskov, M{\"{u}}ller,
  and Zien]{KloBreSonLasMueZie2009}
M.~Kloft, U.~Brefeld, S.~Sonnenburg, P.~Laskov, K.~M{\"{u}}ller, and A.~Zien.
\newblock Efficient and accurate $l_p$-norm multiple kernel learning.
\newblock In \emph{Proceedings of Advances in Neural Information Processing
  Systems 22 (NIPS~2009)}, pages 997--1005, 2009.

\bibitem[Knop et~al.(2017{\natexlab{a}})Knop, Kouteck{\'{y}}, and
  Mnich]{KnoKouMni2017}
D.~Knop, M.~Kouteck{\'{y}}, and M.~Mnich.
\newblock Combinatorial n-fold integer programming and applications.
\newblock In \emph{Proceedings of the 21st Annual European Symposium on
  Algorithms (ESA~'13)}, pages 54:1--54:14, 2017{\natexlab{a}}.

\bibitem[Knop et~al.(2017{\natexlab{b}})Knop, Kouteck\'y, and
  Mnich]{knop_combinatorial_2017}
D.~Knop, M.~Kouteck\'y, and M.~Mnich.
\newblock Combinatorial n-fold {{Integer Programming}} and {{Applications}}.
\newblock Technical report, arXiv:1705.08657, 2017{\natexlab{b}}.

\bibitem[Lanct{\^{o}}t et~al.(2003)Lanct{\^{o}}t, Li, Ma, Wang, and
  Zhang]{LaLiMaWaZh2003}
J.~K. Lanct{\^{o}}t, M.~Li, B.~Ma, S.~Wang, and L.~Zhang.
\newblock Distinguishing string selection problems.
\newblock \emph{Information and Computation}, 185\penalty0 (1):\penalty0
  41--55, 2003.

\bibitem[Love et~al.(1988)Love, Morris, and Wesolowsky]{LovMorWes1988}
R.~F. Love, J.~J.~G. Morris, and G.~O. Wesolowsky.
\newblock \emph{Facilities Location: Models \& Methods}.
\newblock North-Holland, 1988.

\bibitem[Ma and Sun(2009)]{MaSun2009}
B.~Ma and X.~Sun.
\newblock More efficient algorithms for closest string and substring problems.
\newblock \emph{SIAM Journal on Computing}, 39\penalty0 (4):\penalty0
  1432--1443, 2009.

\bibitem[{MacQueen}(1967)]{Macqueen67somemethods}
J.~B. {MacQueen}.
\newblock Some methods for classification and analysis of multivariate
  observations.
\newblock In \emph{Proceedings of $5^{\text{th}}$ Berkeley Symposium on
  Mathematical Statistics and Probability}, pages 281--297, 1967.

\bibitem[Money et~al.(1982)Money, Affleck-Graves, Hart, and
  Barr]{MonAffHarBar1982}
A.~H. Money, J.~F. Affleck-Graves, M.~L. Hart, and G.~D.~I. Barr.
\newblock The linear regression model: $l_p$ norm estimation and the choice of
  $p$.
\newblock \emph{Journal of Communications in Statistics--Simulation and
  Computation}, 11\penalty0 (1):\penalty0 89--109, 1982.

\bibitem[Nesterov and Nemirovskii(1994)]{NesNem1994}
Y.~Nesterov and A.~Nemirovskii.
\newblock \emph{Interior-Point Polynomial Algorithms in Convex Programming}.
\newblock Series: Studies in Applied and Numerical Mathematics. Society for
  Industrial and Applied Mathematics, 1994.

\bibitem[Pascual et~al.(2018)Pascual, Rzadca, and Skowron]{PascRzaSko2018}
F.~Pascual, K.~Rzadca, and P.~Skowron.
\newblock Collective schedules: Scheduling meets computational social choice.
\newblock In \emph{Proceedings of the 17th International Conference on
  Autonomous Agents and Multiagent Systems (AAMAS~'18)}, pages 667--675, 2018.

\bibitem[Pevzner(2000)]{Pev2000}
P.~A. Pevzner.
\newblock \emph{Computational Molecular Biology--An Algorithmic Approach}.
\newblock MIT Press, 2000.

\bibitem[Roth(2006)]{Roth2006}
R.~M. Roth.
\newblock \emph{Introduction to coding theory}.
\newblock Cambridge University Press, 2006.

\bibitem[Shier and Dearing(1983)]{shier_optimal_1983}
D.~R. Shier and P.~M. Dearing.
\newblock Optimal {{Locations}} for a {{Class}} of {{Nonlinear}},
  {{Single}}-{{Facility Location Problems}} on a {{Network}}.
\newblock \emph{Operations Research}, 31\penalty0 (2):\penalty0 292--303, 1983.

\bibitem[Sivarajan(2018)]{sivarajan_generalization_2018}
S.~Sivarajan.
\newblock A {{Generalization}} of the {{Minisum}} and {{Minimax Voting
  Methods}}.
\newblock \emph{SIAM Undergraduate Research Online}, 11, 2018.

\bibitem[Zeng et~al.(2013)Zeng, So, and Zoubir]{ZengSoZoubir2013}
W.~Zeng, H.~So, and A.~M. Zoubir.
\newblock An $\ell_p$-norm minimization approach to time delay estimation in
  impulsive noise.
\newblock \emph{Digital Signal Processing}, 23\penalty0 (4):\penalty0
  1247--1254, 2013.

\end{thebibliography}
\newcommand{\bibremark}[1]{\marginpar{\tiny\bf#1}}



\end{document}